\def\BibTeX{{\rm B\kern-.05em{\sc i\kern-.025em b}\kern-.08em
    T\kern-.1667em\lower.7ex\hbox{E}\kern-.125emX}}
\newtheorem{lemma}{Lemma}
\newtheorem{prop}{Proposition}
\begin{document}
% \title{Quantum Time Evolution Operator Design for Kinetic Energy Operator with Efficient Quantum Architecture}

\title{Hamiltonian Encoding for Quantum Approximate Time Evolution of Kinetic Energy Operator}

\author{{Mostafizur Rahaman Laskar}$^1$, {Kalyan Dasgputa}$^2$, {Amit Kumar Dutta}$^1$, {Atanu Bhattacharya}$^3$\\
$^1${G. S. Sanyal School of Telecommunications, Indian Institute of Technology Kharagpur, India}\\
$^2${IBM Quantum, IBM Research, Bangalore, India}\\
$^3${Department of Chemistry, GITAM Visakhapatnam, India}}

\maketitle

\begin{abstract}
The time evolution operator plays a crucial role in the precise computation of chemical experiments on quantum computers and holds immense promise for advancing the fields of physical and computer sciences, with applications spanning quantum simulation and machine learning. However, the construction of large-scale quantum computers poses significant challenges, prompting the need for innovative and resource-efficient strategies. Traditional methods like phase estimation or variational algorithms come with certain limitations such as the use of classical optimization or complex quantum circuitry. One successful method is the Trotterization technique used for quantum simulation, specifically in atomic structure problems with a gate complexity of approximately $\mathcal{O}(n^2)$ for an $n$-qubit realization. In this work, we have proposed a new encoding method, namely quantum approximate time evolution (QATE) for the quantum implementation of the kinetic energy operator as a diagonal unitary operator considering the first quantization level. The theoretical foundations of our approach are discussed, and experimental results are obtained on an IBM quantum machine. Our proposed method offers gate complexity in sub-quadratic polynomial with qubit size $n$ which is an improvement over previous work. Further, the fidelity improvement for the time evolution of the Gaussian wave packet has also been demonstrated. 
\end{abstract}

\begin{IEEEkeywords}
Quantum time evolution, Hamiltonian encoding, quantum chemistry
\end{IEEEkeywords}

\IEEEpeerreviewmaketitle
%-------------Introduction-------------------

\section{Introduction}
%--------------------------------------------------

% \subsection{Motivation}
Quantum mechanics, a fundamental theory in physics, provides a powerful framework for understanding the behaviour of particles at the atomic and subatomic levels. In the realm of chemistry, where the properties and interactions of atoms and molecules are of paramount importance, quantum mechanics plays a crucial role in elucidating their behaviour. One key concept in quantum mechanics is the time evolution of quantum states, governed by the unitary operator known as the time evolution operator, denoted as $\mathbf{U}(t)$. This operator describes how a quantum state changes over time, encapsulating the dynamics of a quantum system and allowing for the calculation of various physical observables. In the context of chemistry, the time evolution operator is particularly significant as it enables the simulation and prediction of chemical reactions, the study of energy transfer processes, and the understanding of electronic and vibrational spectra. By leveraging the time evolution operator, chemists can explore intricate details of chemical reactions, including bond breaking and formation, energy transfer, and excited state formation, which are inherently quantum mechanical phenomena. Furthermore, the time evolution operator plays a pivotal role in the field of quantum computing, where it facilitates the simulation and exploration of complex chemical systems, offering promising solutions to computationally demanding problems such as simulating large molecules and optimizing chemical reactions.
\subsection{Background}

Quantum simulation of electronic structure in quantum chemistry has been a prominent research area, aiming to understand the time evolution of wave functions using the kinetic and potential energy operators in the Hamiltonian \cite{klymko2022real, motta2021low, poulin2014trotter, shokri2021implementation}. However, the dynamics of chemical reactions, especially in complex systems, pose challenges that cannot be efficiently addressed by classical computation, necessitating the use of quantum algorithms \cite{dhar2015quantum, stepanov2018quantum, mcardle2019variational}. The complexity is further amplified by interactions between particles and quantum tunnelling effects, which perturb the Hamiltonian operator \cite{tranter2019ordering, stepanov2018quantum, godoy1992quantum, jiang2018quantum}. Despite the demand, the realization of time-evolving states remains challenging due to the high gate complexity of existing quantum simulation algorithms on limited physical resources \cite{shokri2021implementation,malik2019first, pastori2022characterization, kivlichan2018quantum, babbush2017low}.

Quantum Hamiltonian simulations (QHS) involve approximating a unitary operator corresponding to a given Hamiltonian matrix \cite{berry2015hamiltonian, low2019hamiltonian}. Various frameworks, such as the Trotter-Suzuki product formula \cite{berry2007efficient}, truncated Taylor series \cite{berry2015simulating}, qubitization method \cite{low2019hamiltonian}, quantum walk \cite{berry2015hamiltonian}, and optimal quantum signal processing algorithm \cite{low2017optimal}, have been proposed in QHS, each with its own advantages and challenges. 
There are several methods to solve the energy structure problem and their time dynamics on a quantum computer, such as using quantum phase estimation\cite{aspuru2005simulated}, adiabatic algorithm\cite{babbush2014adiabatic}, variational approach\cite{peruzzo2014variational} etc. However, a central challenge remains with the quantum simulation of the underlying Hamiltonian.   
From a practical implementation standpoint, the Trotter-Suzuki-based approach, known as the "Trotterization" technique, is commonly employed on quantum computers for applications like atomic structure problems \cite{poulin2014trotter, babbush2015chemical, shokri2021implementation}. 
The $2^{nd}$-order Trotter-Suzuki approximation for the time evolution of the Hamiltonian $\mathbf{H}$ is given by
\begin{align}
e^{-i\mathbf{H}\Delta t} &= e^{-i(\mathbf{K}+\mathbf{V})\Delta t}\nonumber\\
&= e^{-i\frac{\mathbf{V}}{2}\Delta t} e^{-i\mathbf{K}\Delta t} e^{-i\frac{\mathbf{V}}{2}\Delta t} + \mathcal{O}(\Delta t^3),
\end{align}
where the terms $\mathbf{K}$, and $\mathbf{V}$ represent the Hamiltonian for kinetic energy and potential energy operator respectively. 
In recent literature \cite{mcardle2019variational, motta2020determining}, quantum simulation of the imaginary time evolution for the Hamiltonian operator $\mathbf{H}$ has been considered as a powerful tool for studying quantum systems. In order to implement the time-evolution operator, authors in \cite{mcardle2019variational} showed a variational approach to find the ground state energy of a multi-particle system (e.g., lithium hydride). However, this approach is a hybrid approach which considers a classical optimizer in addition to the quantum circuit to prepare a variational ansatz. An interesting approach known as the inexact quantum imaginary time evolution (QITE) algorithm as shown in \cite{motta2020determining} showed that a unitary operator can be created to a domain $\mathcal{D}$ smaller than that induced by correlations for the resource-limited quantum computation. For studying the dynamics of free particles in a finite potential well, the Trotterization approach is shown promising, especially the implementation on a quantum machine \cite{shokri2021implementation}. However, for the complex Hamiltonian dynamics, circuit optimization has not been explored well, which can pose significant challenges for higher-dimensional configurations\cite{motta2021low}.

\subsection{Contributions:}  Given the above background, our approach is conceptually novel for designing a time evolution operator with the following contributions. 

\begin{itemize}

\item We exploit the bi-symmetric diagonal structure of the kinetic energy operator and propose a quantum pyramid architecture using the ladder of CNOT gates. It uses half of the samples to be encoded using the circuitry and the other half (about the plane of symmetry) is generated by the reflection operator. We have observed that the ladder of CNOT gates acts as a reflection operator. Using this phenomenon, we design a quantum pyramid architecture for encoding the kinetic energy (which is a bi-symmetric diagonal) operator. 

\item We propose new encoding techniques for the kinetic energy operator. One method is quantum approximate time evolution (QATE) encoding for simulating the Hamiltonian with a high accuracy. Here, the $1$ qubit gates requirement is less than the state of the art, however, $2$ qubit gates are similar in number. The other method is called quantum windowing encoding (QWE) which is inspired by the window technique used in signal processing literature. 

\item We implement our proposed algorithms on an IBM quantum machine, and show the results for a Gaussian wave packet evolving in time steps. Further, we show the fidelity result and gate counts for various qubit sizes. We demonstrate new concepts for low-complex simulation of kinetic energy operators, which can have novel applications in the near future.

\end{itemize}

%--------------------------------------------------
\section{The Time Evolution Operator}

Given a Hamiltonian $\mathbf{H}=\sum_{i=1}^{l} \mathbf{H}_i$, with $l$-local terms, the time evolution of a wave function ${\psi}$ for time-step $\Delta t$ can be written following Schrodinger's equation as 
\begin{align}
    \ket{{\psi}_{t+\Delta t}}= \mathbf{U}(\Delta t) \ket{{\psi}_{t}},
    \label{time_ev}
\end{align}
where $\mathbf{U}(\Delta t)=e^{-i\mathbf{H}\Delta t}$ denotes the time-evolution operator that transforms the state of the system from $|\psi(t)\rangle$ to $|\psi(t + \Delta t)\rangle$. Various mathematical techniques and numerical methods have been developed to compute the Time Evolution Operator efficiently and accurately as discussed in the background. These methods are crucial for simulating quantum systems, studying quantum dynamics, and exploring the behaviour of complex quantum phenomena. Here, we describe a fermionic Hamiltonian system and discuss some special cases with efficient algorithms for implementation on a quantum machine. Our approach is focused on the structural aspects of the underlying Hamiltonian operator to find optimal gate complexity, thereby reducing the total gate cost as well as the noise level in realistic experimentation on a quantum computer. 

% \subsection{Perspective}

In a non-relativistic case, the behaviour of Hamiltonian considers that particles (such as electrons) interact in the external potential of another particle (positively-charged nuclei) described within the Born-Oppenheimer approximation, given by

\begin{align}
    \mathbf{H}= -\sum_{i} \frac{\nabla^2_i}{2}- \sum_{i,j}\frac{q_j}{\vert R_j -r_i \vert} + \sum_{i<j}\frac{1}{r_i-r_j} + \sum_{i<j}\frac{q_iq_j}{R_i-R_j}.
    \label{diff-Ham}
\end{align}
Here, $\mathbf{H}_1=-\displaystyle\sum_{i} \frac{\nabla^2_i}{2}$ denotes the kinetic energy term, $\mathbf{H}_2=- \displaystyle\sum_{i,j}\frac{q_j}{\vert R_j -r_i \vert}$ represents the potential energy where $q_j$ are charges of the nuclei, $R_j$ and $r_i$ are positions of the nuclei and electrons respectively; the $\mathbf{H}_3=\displaystyle\sum_{i<j}\frac{1}{r_i-r_j}$ denotes the electron-electron repulsion potential term, and $\mathbf{H}_4=\displaystyle\sum_{i<j}\frac{q_iq_j}{R_i-R_j}$ is some constant term. Discretization techniques are employed to convert the differential form (\ref{diff-Ham}) to a practical computational problem \cite{babbush2017low,kivlichan2017bounding}, and the Hamiltonian is simplified as
\begin{align}
    \mathbf{H} = \mathbf{K}(\hat{p}) + \mathbf{V}(\hat{x}),
\end{align}
where $\mathbf{K}(\hat{p})$ denotes the discretized kinetic energy operator in momentum domain ($\hat{p}$), and the $\mathbf{V}(\hat{x})$ represents the potential energy term expressed in coordinate domain ($\hat{x}$). The effects due to other terms in (\ref{diff-Ham}) i.e., $\mathbf{H}_3$, and $\mathbf{H}_4$ can either be ignored for simplification or can be absorbed in the corresponding kinetic or potential energy term as per their representation either in position or momentum basis. Note that, we will consider the first quantization level expression for encoding the energy operators in the quantum circuit.

\subsection{Potential Energy}

The potential energy operator denoted as $e^{-i\mathbf{V}\Delta t}$, holds immense importance in quantum mechanics as a key component of the Hamiltonian\cite{griffiths2005introduction, gasiorowicz2005quantum}. It characterizes the potential energy associated with a quantum system, exerting a profound influence on its behaviour and properties. In specific scenarios, the potential energy operator manifests in diverse forms contingent upon the characteristics of the potential energy itself. A noteworthy example is the finite step potential, which features abrupt shifts in potential energy at distinct positions within the system. This phenomenon is relevant in numerous contexts, ranging from quantum wells to barrier structures, where the potential energy undergoes sudden changes at specific locations.

The step potential operator for evolution time $\Delta t/r$ (with $r=2$ for $2^{nd}$-order Trotterization) can be written as
\begin{align}
    \mathbf{U}_{V}({\Delta t/r})=e^{-i\mathbf{V}(\hat{x}) \Delta t/r}.
\end{align}

For a single-step potential, the potential energy abruptly changes at a particular position, given as 
\begin{align}
\mathbf{V}(x) =
\begin{cases}
V_1 & \text{for } x < x_0 \nonumber\\
V_2 & \text{for } x \geq x_0,
\end{cases}
\end{align}

where $V_1$ and $V_2$ represent the potential energy values on either side of the step at $x_0$. The potential energy operator for the single-step potential can be realized by applying the appropriate phase shift based on the potential energy values. Similarly, for a double-step potential, there are two abrupt changes in the potential energy at different positions. Mathematically, this can be represented as
\begin{align}
\mathbf{V}(x) =
\begin{cases}
V_1 & \text{for } x < x_1 \nonumber\\
V_2 & \text{for } x_1 \leq x < x_2 \nonumber\\
V_3 & \text{for } x \geq x_2,
\end{cases}
\end{align}
where $V_1$, $V_2$, and $V_3$ are the potential energy values in different regions separated by the step positions $x_1$ and $x_2$. The potential energy operator for the double-step potential involves applying the respective phase shifts corresponding to each region.
In the case of multiple-step potentials, the potential energy exhibits multiple abrupt changes at different positions. The mathematical description becomes more complex, involving multiple regions with different potential energy values and corresponding phase shifts. To implement the potential energy on a digital computer, we need to perform discretization on $x$-space ($-d<x<d$), where each sample can be represented on a grid (with each smallest grid of $\Delta x=\frac{2d}{N}$ for $N$ samples) as
\begin{align}
    x_k= -d + \left( k+ \frac{1}{2}\right) \Delta x.
\end{align}
The single, double, and multiple well potentials (with equal potential barriers) can be implemented as a quantum circuit using elementary quantum gates as follows
\begin{align}
    \mathbf{U}_{V}^s & =e^{-i \eta Z \Delta t/r} \otimes I \otimes I \dots \otimes I, \nonumber \\
    \mathbf{U}_{V}^d &= I \otimes e^{-i\eta Z\Delta t/r}\otimes I\dots \otimes I ~\text{and}, \nonumber\\
    \mathbf{U}_{V}^m &= I \otimes I \dots I\otimes e^{-i\eta Z\Delta t/r}\otimes I \dots \otimes I,
\end{align}
where $\eta$ denotes magnitude of the potential barrier, and $Z$ is the Pauli-$Z$ operator. 
Note that, by changing the position of the Pauli-$Z$ operator with respect to the identity operators (a single qubit operation) we can create any choice of the step potentials. For a $N\times N$, potential energy operator, here we need a Pauli-$Z$ operator.   Hence, for an $n$ input qubit circuit, the overall number of elementary quantum gates required for the quantum gate implementation of the potential energy operator is given by $\tilde{\mathcal{O}}(n)$ with $N=2^n$.

\subsection{Kinetic Energy}

The unitary form of the kinetic energy operator is expressed as $\mathbf{U}=\mathbf{e}^{-i\mathbf{K}t}$, which is a diagonal matrix representing the time evolution operator. 
Mathematically, we can represent the kinetic energy operator as $\mathbf{K} = \frac{{\mathbf{p}^2}}{{2m}}$, where $\mathbf{p}$ is the momentum operator and $m$ is the mass.

For the implementation of the Kinetic energy operator on a digital computer, we need the discretized representation of the system, where the position is represented by discrete points or samples. In this case, we can represent the momentum variable as $p = n\Delta p$, where $n$ is an integer representing the sample index and $\Delta p$ is the spacing between samples.
Assuming a simple configuration (e.g., the motion of a free particle in a step potential), the kinetic energy operator shows a parabolic function of the momentum variable and exhibits a plane of reflection in about half of the samples (refer to Fig. \ref{res1}.$a$). We consider a one-dimensional grid in the $p$-space, taking a finite range $-d<p<d$ with $N$ uniformly spaced grid-points representing the samples of the continuous variable $p$ given by 
\begin{align}
    p_j:= \frac{\pi}{d}\left( j+ \frac{1}{2}-\frac{N}{2} \right)~~~ \text{for}~ j=0,~1,~\dots, N-1.
    \label{p-discrete}
\end{align}
As the kinetic energy operator is a parabolic function of $p$, it is an even function with respect to the sample index $p_j$, and its diagonal elements exhibit symmetry about the plane of reflection. It shows that the Hamiltonian operator corresponding to kinetic energy has a bi-symmetric pattern due to its parabolic (or even-symmetric) nature. To encode the kinetic energy function in the Hamiltonian operator, we define another variable $\mathbf{D}_{\theta}:=\mathbf{K}(\hat{p})\Delta t$ where $\mathbf{D}_{\theta}=\textbf{diag}(\theta_0,~\theta_1,\dots,\theta_l,\theta_l,\dots,\theta_{N-1})$ with its plane of reflection about the coordinate $(l,l)$ (here, $l=\frac{N}{2}$). 
Using this definition, we can construct the time evolution operator $\mathbf{U}$ as a diagonal matrix with elements $\mathbf{e}^{-i\mathbf{K}t}$ for each diagonal element as follows
    \begin{align}
        \mathbf{U}_K(\Delta t) &= e^{-i\mathbf{D}_{\theta}}\nonumber\\
        % &=\begin{bmatrix}
        % e^{-i\theta_0}\\
        % &e^{-i\theta_1}\\
        % &&e^{-i\theta_2}\\
        % &&&\ddots\\
        % &&&&e^{-i\theta_l}\\
        % &&&&&e^{-i\theta_l}\\
        % &&&&&&\ddots\\
        % &&&&&&&e^{-i\theta_2}\\
        % &&&&&&&&e^{-i\theta_1}\\
        % &&&&&&&&&e^{-i\theta_0}
        % \end{bmatrix} \nonumber\\
        &= e^{-i\textbf{diag}(\theta_0,~\theta_1,\dots,\theta_l,\theta_l,\dots,\theta_{N-1})}\nonumber\\
        &= \left(\begin{array}{c@{}c@{}c}
      \mathbf{B} & | & \mathbf{0}\\
      \hline
      \mathbf{0} & | & \mathbf{B}_{ref}
    \end{array}\right),
    \label{ke_eq1}
    \end{align}
where $\mathbf{B}=\textbf{diag}(\mathbf{b})$, and $\mathbf{B}_{ref}$ is the reflection of $\mathbf{B}$ for $\mathbf{b}=[e^{-i\theta_0}, ~e^{-i\theta_1}, \dots, ~e^{-i\theta_l}]$. 
Thus, the unitary relation $\mathbf{e}^{-i\mathbf{K}t}$ through Hamiltonian simulation captures the time evolution of the system under the influence of the kinetic energy operator, with a parabolic momentum dependence and a plane of reflection symmetry about half of the samples.
% \begin{figure}[htb!]
%     \centering
%     \includegraphics[width=0.5\linewidth]{}
%     \caption{Kinetic energy of a free particle in a $1$-D lattice within a finite potential well}
%     \label{kinetic_freeparticle}
% \end{figure}
%
The direct implementation of the operator $\mathbf{U}_K(\Delta t)$ on a superconducting qubit-based quantum machine using the Trotterization method is addressed in \cite{shokri2021implementation, malik2019first}.

%--------------- Below portion needs integration for Nature paper -------------------------
% \subsection{Coulombic Interaction}

% \subsection{Quantum Tunneling}

%--------------------------------------------------

\section{Encoding Kinetic Energy Evolution Operator on a Quantum Circuit}

In this research work, we exploit the parabolic nature of the Kinetic energy which successively generates the bi-symmetric pattern in the unitary operator $\mathbf{U}_K$. Here, we define the parameter $\theta_j=\frac{-p_j^2 \Delta t}{2m}$ (for simplicity, we take $m=1$) for $j=0,\dots, N-1$. As a consequence, the $(j,j)^{th}$ coordinate of the $\mathbf{U}_K$ denotes the element $e^{-i\theta_j}$. A conceptually novel quantum architecture is shown here which can efficiently simulate (in an approximate sense) the operator $\mathbf{U}_K$ on quantum hardware by exploiting its structure. The below lemmas demonstrate the motivation behind our approach for the bi-symmetric kinetic energy operator. 

\begin{lemma}\label{lemma1}
    Given $\mathbf{C}$ be a CNOT operator, and $\mathbf{P}=\mathbf{I}\otimes \mathbf{P}_1$ is another operator with $\mathbf{I}$ be the identity operator and $\mathbf{P}_1$ is some phase gate, then $\mathbf{R}=\mathbf{C} \mathbf{P} \mathbf{C}^{\dagger}$ is a bi-symmetric diagonal quantum operator. 
\end{lemma}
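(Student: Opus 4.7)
\medskip

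\noindent\textbf{Proof proposal.} The plan is to verify the lemma by computing $\mathbf{R}=\mathbf{C}\mathbf{P}\mathbf{C}^{\dagger}$ directly in the computational basis and then reading off its diagonal entries to confirm the bi-symmetric pattern. Since $\mathbf{P}=\mathbf{I}\otimes\mathbf{P}_1$ with $\mathbf{P}_1=\mathrm{diag}(e^{i\phi_0},e^{i\phi_1})$ is diagonal and $\mathbf{C}$ is a permutation matrix (a CNOT), the conjugate $\mathbf{R}$ will automatically be diagonal; the content of the lemma is that its diagonal has the reflection symmetry required by the kinetic-energy structure in \eqref{ke_eq1}.

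First, I would fix the convention that $\mathbf{C}$ is the CNOT with the first qubit as control and the second as target, so that its action on basis states is $\mathbf{C}\ket{b_1 b_2}=\ket{b_1,\,b_1\oplus b_2}$. Using $\mathbf{C}^{\dagger}=\mathbf{C}$, I would chain the three operators on an arbitrary basis vector:
\begin{align*}
\mathbf{R}\ket{b_1 b_2}
&=\mathbf{C}\mathbf{P}\mathbf{C}\ket{b_1 b_2}
=\mathbf{C}\mathbf{P}\ket{b_1,\,b_1\oplus b_2}\\
&=e^{i\phi_{b_1\oplus b_2}}\,\mathbf{C}\ket{b_1,\,b_1\oplus b_2}
=e^{i\phi_{b_1\oplus b_2}}\ket{b_1 b_2}.
\end{align*}
This shows $\mathbf{R}$ is diagonal with entries determined by $b_1\oplus b_2$ rather than $b_2$ alone, which is precisely the mechanism producing the symmetry.

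Next, I would enumerate the four basis indices $b_1 b_2\in\{00,01,10,11\}$ and list the diagonal entries in order as $(e^{i\phi_0},e^{i\phi_1},e^{i\phi_1},e^{i\phi_0})$. Identifying this with $\mathrm{diag}(\theta_0,\theta_1,\theta_1,\theta_0)$ in the notation of \eqref{ke_eq1} with $l=N/2=2$, the reflection about the coordinate $(l,l)$ is manifest, establishing the bi-symmetric property. I would also remark briefly that $\mathbf{R}$ is unitary (as a conjugate of a unitary by a unitary), so it is a valid quantum operator, and that the construction requires only a single one-qubit phase gate together with two CNOTs.

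The argument is short and I do not expect a true obstacle; the only subtlety is a bookkeeping one, namely making the control/target convention of $\mathbf{C}$ and the tensor ordering in $\mathbf{P}=\mathbf{I}\otimes\mathbf{P}_1$ mutually consistent, since swapping them would conjugate a different slot of $\mathbf{P}$ and change which diagonal entries are paired. Once the convention is pinned down, the computation above is mechanical and the bi-symmetry is immediate from the appearance of $b_1\oplus b_2$ in the acquired phase.
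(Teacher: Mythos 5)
Your proof is correct and follows essentially the same route as the paper's: a direct verification that $\mathbf{R}=\mathbf{C}\mathbf{P}\mathbf{C}^{\dagger}=\mathrm{diag}(e^{i\phi_0},e^{i\phi_1},e^{i\phi_1},e^{i\phi_0})$, which the paper carries out by explicit $4\times4$ matrix multiplication while you obtain it from the action on computational basis states. Your observation that the acquired phase depends only on $b_1\oplus b_2$, which is invariant under complementing both bits (i.e.\ under the index reflection $j\mapsto N-1-j$), is a slightly more transparent way of seeing the same palindromic diagonal, but it is the same calculation.
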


\begin{proof}
  The proof is given in Appendix-\ref{bisym-lem1}.
\end{proof}

\begin{lemma}\label{lemma2}
 Given a list of phase gates as $\mathbf{P}_1,~\dots,~\mathbf{P}_n$ with every $\mathbf{P}_j=diag([1 ~ e^{i\theta_j}])$ placed at $j^{th}$ qubit starting $q[1]$ (second qubit) to $q[n-1]$ (last qubit) with $n=\log_2 N$, the product of the operators $\mathbf{F}_1,~\dots, ~\mathbf{F}_n$ is a diagonal matrix of dimension $N\times N$ with first $2^{n-1}$ elements repeated in order along the main diagonal, where every $\mathbf{F}_j$ is obtained by placing $\mathbf{P}_j$ phase gate at $j^{th}$ qubit in absence of any other gates. 
\end{lemma}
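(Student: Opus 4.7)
The plan is to exploit the fact that every single-qubit phase gate is already diagonal in the computational basis, so each $\mathbf{F}_j$ is diagonal, the product is diagonal, and the claim reduces to tracking which bits of the basis index contribute a nontrivial phase. In particular, since the hypothesis places no phase gate on the first qubit $q[0]$, the accumulated phase on any basis state is independent of that qubit's bit, which is exactly what produces the block-repetition pattern.

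First I would write each operator explicitly as
\begin{align}
\mathbf{F}_j = \mathbf{I}^{\otimes j} \otimes \mathbf{P}_j \otimes \mathbf{I}^{\otimes (n-1-j)}, \qquad j = 1, \dots, n-1,
\end{align}
with qubits labelled $q[0], q[1], \dots, q[n-1]$ from most to least significant. Acting on a computational basis state $\ket{b_0 b_1 \cdots b_{n-1}}$ one immediately gets $\mathbf{F}_j \ket{b_0 \cdots b_{n-1}} = e^{i \theta_j b_j}\ket{b_0 \cdots b_{n-1}}$, so each $\mathbf{F}_j$ is diagonal and its eigenvalue depends only on the bit $b_j$.

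Next I would multiply the operators. Because all $\mathbf{F}_j$ are diagonal they mutually commute, and on any basis state the product contributes the overall phase
\begin{align}
\prod_{j=1}^{n-1} e^{i\theta_j b_j} \;=\; \exp\!\Bigl(i \sum_{j=1}^{n-1} \theta_j b_j \Bigr).
\end{align}
The key observation is that this sum does \emph{not} involve $b_0$. Labelling diagonal entries by the integer $k = \sum_{i=0}^{n-1} b_i\, 2^{n-1-i} \in \{0,\dots,N-1\}$, the entry at position $k$ therefore depends only on $k \bmod 2^{n-1}$, giving $d_k = d_{k + 2^{n-1}}$ for $0 \le k < 2^{n-1}$, i.e., the first $2^{n-1}$ diagonal entries repeat in the same order along the second half of the diagonal, as claimed.

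The mathematical content is essentially a direct computation in the computational basis, so I expect the main obstacle to be purely notational: keeping the qubit ordering (which qubit is ``first'') and the bit ordering (MSB vs.\ LSB) consistent, and pinning down that the hypothesis really does exclude $q[0]$ from any phase gate, since that single fact is what drives the entire conclusion.
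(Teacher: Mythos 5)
Your proof is correct, and it is actually more complete than the one in the paper. The paper proves the lemma by explicit matrix computation for the $n=3$ case only: it writes out $\mathbf{F}_1=\mathbf{I}\otimes\mathbf{P}_1\otimes\mathbf{I}$ and $\mathbf{F}_2=\mathbf{I}\otimes\mathbf{I}\otimes\mathbf{P}_2$ as $8\times 8$ matrices, multiplies them to exhibit the repeated block $(1,e^{i\theta_2},e^{i\theta_1},e^{i(\theta_1+\theta_2)})$ on the diagonal, and then asserts that the pattern persists ``in a similar way'' for larger $n$. You instead give a genuinely general argument: each $\mathbf{F}_j$ acts on $\ket{b_0\cdots b_{n-1}}$ by the phase $e^{i\theta_j b_j}$, the product accumulates $\exp\bigl(i\sum_{j=1}^{n-1}\theta_j b_j\bigr)$, and since no gate sits on $q[0]$ the phase is independent of the most significant bit $b_0$, forcing $d_k=d_{k+2^{n-1}}$. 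This isolates the single fact that drives the conclusion (the absence of a gate on the first qubit) and covers all $n$ at once, whereas the paper's version buys concreteness and a worked example at the cost of rigor for general $n$. Your closing caveat about MSB/LSB conventions is well placed; the paper itself flags in an appendix note that QISKIT's qubit ordering permutes these matrices, so the only thing to pin down is that your index convention $k=\sum_i b_i 2^{n-1-i}$ matches the one under which the lemma's ``first $2^{n-1}$ elements'' are meant.
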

\begin{proof}
  The proof is given in Appendix-\ref{bisym-lem2}.
\end{proof}

Often, the kinetic energy of a fermion has the form of a quadratic function (parabolic), which can be represented as a bi-symmetric and diagonal unitary operator. Based on Lemma-\ref{lemma1}, and Lemma-\ref{lemma2}, we have given the below proposition for the bi-symmetric diagonal operator. 

\begin{prop}\label{prop1}
  The operator $\mathbf{A}=\mathbf{I}\otimes \dots \otimes (\ket{\mathbf{0}}\bra{\mathbf{0}} \times \mathbf{I}~ + ~\ket{\mathbf{1}}\bra{\mathbf{1}}\times \mathbf{X})\otimes\mathbf{I}\otimes\dots \otimes\mathbf{I}$ is a row-exchange operator for a given matrix $\mathbf{F}$ (assuming compatible with $\mathbf{A}$) and a Pauli-operator $\mathbf{X}$, if it is multiplied as $\mathbf{AF}$, and $\mathbf{A}^{\dagger}=\mathbf{A}$ is a column-exchange operator when it is post-multiplied as $\mathbf{FA^{\dagger}}$. The product $\mathbf{A}\mathbf{F}\mathbf{A}^{\dagger}$ has a symmetry about the mid-point along the main diagonal when $\mathbf{F}=\mathbf{F}_1,\dots, \mathbf{F}_n$ following lemma-\ref{lemma2}.
\end{prop}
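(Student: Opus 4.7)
The plan is to address the three assertions of the proposition in sequence: that $\mathbf{A}\mathbf{F}$ is a row-exchange, that $\mathbf{F}\mathbf{A}^\dagger$ is a column-exchange, and that $\mathbf{A}\mathbf{F}\mathbf{A}^\dagger$ has midpoint symmetry down the main diagonal when $\mathbf{F}=\mathbf{F}_1 \cdots \mathbf{F}_n$ is the product from Lemma \ref{lemma2}.

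First, I would show that $\mathbf{A}$ itself is a permutation matrix on the computational basis. The central factor $\ket{\mathbf{0}}\bra{\mathbf{0}}\otimes\mathbf{I} + \ket{\mathbf{1}}\bra{\mathbf{1}}\otimes\mathbf{X}$ is just the two-qubit CNOT, which maps $\ket{b_c\, b_t} \mapsto \ket{b_c,\; b_t \oplus b_c}$, a bijection on $\{0,1\}^2$. Tensoring with identities preserves the bijective action on basis states, so $\mathbf{A}$ has exactly one $1$ per row and column. From the standard fact that left-multiplication by a permutation matrix $\mathbf{A}$ permutes the rows of $\mathbf{F}$ according to the induced permutation $\pi$, the first claim follows.

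Next, I would verify $\mathbf{A}^\dagger = \mathbf{A}$ by observing that each tensor factor is Hermitian ($\mathbf{I}$, $\mathbf{X}$, and the projectors $\ket{\mathbf{0}}\bra{\mathbf{0}}$, $\ket{\mathbf{1}}\bra{\mathbf{1}}$), and the CNOT block written as a sum of projector-times-Pauli terms is an orthogonal direct sum of Hermitian blocks, hence Hermitian; equivalently, CNOT is involutory. Combining this with the dual fact about right-multiplication by a permutation matrix immediately gives the column-exchange claim.

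For the third and most substantive claim, I would use the structure from Lemma \ref{lemma2}: $\mathbf{F}$ is diagonal with entries $(b_0, b_1, \dots, b_{l-1}, b_0, b_1, \dots, b_{l-1})$, where $l = 2^{n-1}$. Since $\mathbf{A}$ is a permutation matrix and $\mathbf{F}$ is diagonal, $\mathbf{A}\mathbf{F}\mathbf{A}^\dagger$ is itself diagonal with $(\mathbf{A}\mathbf{F}\mathbf{A}^\dagger)_{kk} = \mathbf{F}_{\pi^{-1}(k),\,\pi^{-1}(k)}$. I would then unroll the binary indexing of $k \in \{0, \dots, N-1\}$ as $(c, b_2, \dots, b_n)$, where $c$ is the control-qubit bit, and track how $\pi^{-1}$ performs a conditional flip of the target bit whenever $c=1$. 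Combining this conditional flip with the identical first/second halves of $\mathbf{F}$ gives the midpoint-symmetric pattern along the diagonal.

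The main obstacle will be this last step: carefully tracking the binary-indexed action of the conditional bit flip and verifying that it produces precisely a palindromic diagonal about the mid-point, rather than some other rearrangement of entries. I expect the argument to reduce to a short combinatorial identity, but getting the indexing right — which qubit is control, which is target, and which bits are the leading ones in the convention of Lemma \ref{lemma2} — will take the bulk of the bookkeeping.
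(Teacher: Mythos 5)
Your treatment of the first two claims is fine and matches the paper's reasoning: the CNOT block is a basis permutation, tensoring with identities keeps it one, left/right multiplication by a permutation matrix permutes rows/columns, and Hermiticity follows from the projector-times-Pauli decomposition. The conjugation formula $(\mathbf{A}\mathbf{F}\mathbf{A}^{\dagger})_{kk}=\mathbf{F}_{\pi^{-1}(k),\pi^{-1}(k)}$ is also the right tool, and is in fact cleaner and more general than the paper's explicit $3$-qubit matrix multiplication.

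The gap is in the final step. You describe $\pi^{-1}$ as ``a conditional flip of \emph{the} target bit whenever $c=1$,'' i.e.\ the permutation induced by a \emph{single} CNOT. That permutation moves index $2^{n-1}+j$ to $2^{n-1}+(j\oplus 2^{t})$ for one fixed bit position $t$, which merely swaps pairs inside the second half; applied to the Lemma~\ref{lemma2} diagonal $(b_0,\dots,b_{l-1},b_0,\dots,b_{l-1})$ with $l=2^{n-1}$ it yields $(b_0,\dots,b_{l-1},b_{0\oplus 2^t},\dots)$, not the palindrome. Midpoint symmetry requires $d_{l+j}=b_{l-1-j}$, and since $l-1-j=(l-1)\oplus j$ is the complement of \emph{all} $n-1$ non-control bits, you need the permutation that flips every target bit when the control is $1$. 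That is exactly the product of the full ladder of $n-1$ CNOTs sharing the control qubit --- which is what the paper actually conjugates by (its $3$-qubit $\mathbf{A}$ has the anti-diagonal, i.e.\ full reversal, as its lower-right block), even though the proposition's displayed formula for $\mathbf{A}$ shows only one CNOT factor. To close the gap, replace the single conditional bit flip by the composite ladder operator $\prod_{k=1}^{n-1}\mathrm{CNOT}_{0\to k}$, verify that its induced permutation fixes the first half of indices and reverses the second half, and then the palindromic diagonal follows immediately from the repeated-halves structure of $\mathbf{F}$.
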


\begin{proof}
    The proof is given in Appendix-\ref{propsym1}.
\end{proof}

%---------------------------------

\subsection{Proposed Algorithm}

The Trotterization algorithm is employed for the overall time-evolution algorithm design. The pseudo-code for the $2^{nd}$ order Trotter-Suzuki method is given in Algorithm-\ref{trot-algo}.
%---------------------------------------------------------------
\begin{algorithm}
\caption{Trotterization-based Time Evolution Operator}
\begin{algorithmic}[1]
\Procedure{Trotterization}{$\mathbf{K}$, $\mathbf{V}$, $\Delta t$}
    \State $\mathbf{U} \gets \text{Identity Matrix}$
    \State $t \gets t_0$ 
    \State $\ket{\psi_{t}}\gets \psi_{t_0}$ \Comment{Initialization}
    \For{$t \gets t+\Delta t$} \Comment{Apply Trotter-Suzuki approximation}
        \State $\mathbf{U}  \gets e^{-i\mathbf{V}t/2} \cdot \mathbf{U}_{QFT} e^{-i\mathbf{K}t}\mathbf{U}_{IQFT} \cdot e^{-i\mathbf{V}t/2} \cdot$ \Comment{Apply Trotterization}
        \State $\ket{\psi_{t+\Delta t}}\gets \mathbf{U} \ket{\psi_{t}}$
    \EndFor
    \State \textbf{return} $\mathbf{U}$, $\ket{\psi_{t+\Delta t}}$
\EndProcedure
\end{algorithmic}
\label{trot-algo}
\end{algorithm}
%---------------------------------------------------------------
Note that, in the Trotterization method we employ the quantum Fourier transform (denoted as $\mathbf{U}_{QFT}$) to transform the Kinetic energy from momentum basis ($\hat{p}$) to space basis ($\hat{x}$). The implementation of potential energy operator term $\mathbf{U}_P=e^{-i\mathbf{V}t/2}$ can be implemented in linear gate complexity with input qubit size as discussed earlier. Employing the Trotterization technique to implement the kinetic energy term $\mathbf{U}_K=e^{-i\mathbf{K}t}$ on a quantum machine requires significant quantum resources. Here, we propose a new quantum architecture, namely quantum pyramid architecture (QPA) which helps us design a bi-symmetric operator, which is often the case for kinetic energy operators. A natural operator representation of the kinetic energy as a function of momentum has a plane of reflection about the skew-diagonal as shown in (\ref{ke_eq1}). Exploiting this structure, our proposed quantum architecture is shown in Algorithm-\ref{algo-QPA}. 
%--------------------------------------------------
\begin{algorithm}
\caption{Proposed Algorithm for QPA}
\begin{algorithmic}[1]
\Procedure{QPA}{$n,\boldsymbol{\theta}$}
    \State $QR \gets QuantumRegister(n)$
    \State $CR \gets ClassicalRegister(n)$
    \State $QC \gets QuantumCircuit(QR,CR)$ \label{QC_init}\Comment{Quantum Circuit Initialization}
    \For {$j \gets 1:n-1$}
        % \State $QC.Cx(QR[0],QR[n-j-1])$
        \State $QC.Cx(QR[0],QR[n-j])$ 
        \Comment{Pyramid Layer $1$}
        \State Encode the Operator Circuit for $\mathbf{b}(\boldsymbol{\theta})$
        % \State $QC.cx(QR[0],QR[j+1]))$ 
        \State $QC.Cx(QR[0],QR[j]))$ 
        \Comment{Pyramid Layer $2$}
    \EndFor
    \State \textbf{return} $\mathbf{U}_K$
\EndProcedure
\end{algorithmic}
\label{algo-QPA}
\end{algorithm}
%--------------------------------------------------

\textit{Note on Algorithm-\ref{algo-QPA}:}
Here, the inputs are number of registers ($n$), and the phase vector $\boldsymbol{\theta}=\theta_0~,\dots,~\theta_l$. The quantum pyramid architecture can be designed in a ladder-cascaded form as shown in Fig. \ref{qpa-fig} following the QPA pseudo-code as described. Here, $QR,~CR$, and $QC$ denote the number of quantum registers, classical registers and the quantum circuit respectively. Here, $Cx$ denotes the controlled-NOT gate, which creates entangled quantum states in the circuit. We demonstrate two methods for the encoding of the Hamiltonian. The first encoding method approximates the kinetic energy operator with $n-1$ phase gates, and $^{n-1}\mathcal{C}_2$ controlled-phase gates. The complexity can be further reduced to $\mathcal{O}(n)$ for certain experiments where $n$-level energies are studied instead of $n^2$ available bands, with a proposed approach called quantum windowing encoding (QWE). 

\begin{figure*}[htb!]
    \centering
    \includegraphics[width=0.8\linewidth]{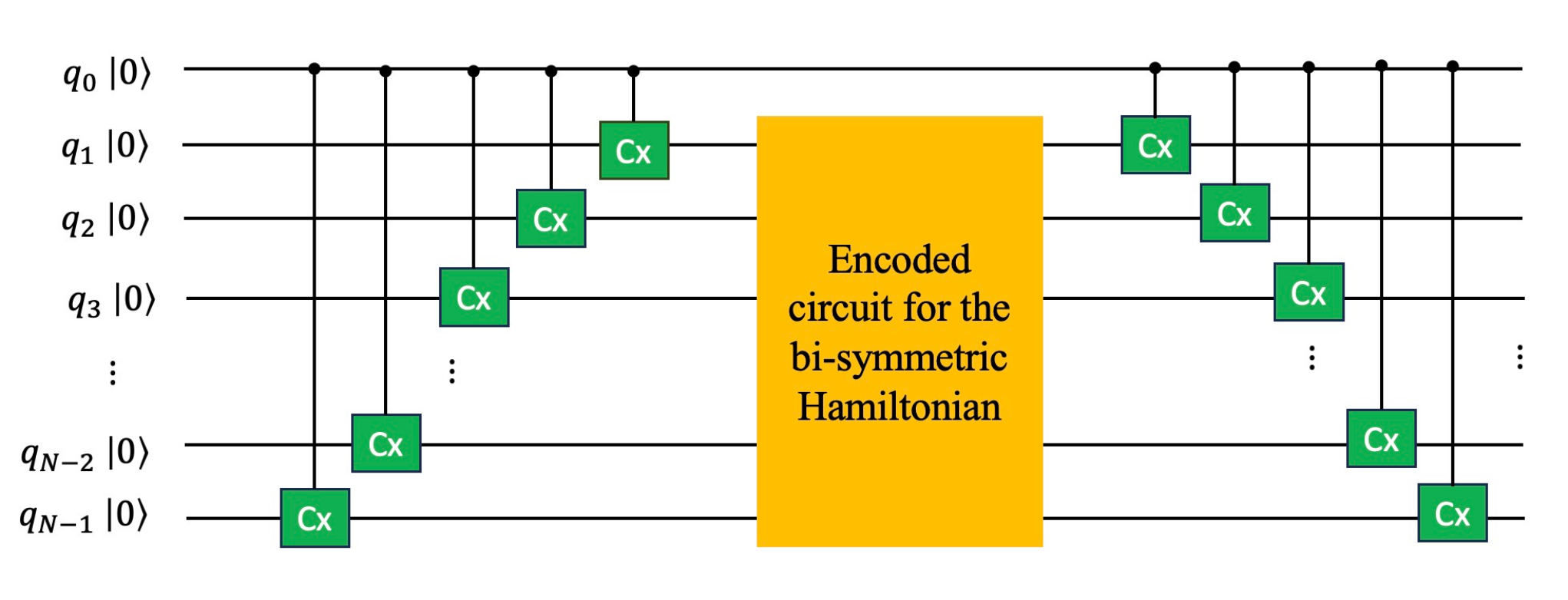}
    \caption{Proposed quantum pyramid architecture for bi-symmetric Hamiltonian operator}
    \label{qpa-fig}
\end{figure*}

\subsection{Encoding Method}

The QPA algorithm helps us to encode the Hamiltonian for $\frac{N}{2}$ sampling points, instead of $N$ samples. The $\frac{N}{2}$ phase samples stored in the vector $\boldsymbol{\theta}$ needs to be encoded in the matrix exponential as $e^{-i\boldsymbol{\theta}}$ along the diagonal of the operator $\mathbf{U}_K$. In the encoding procedure, we will be using $1$-qubit phase gates (denoted as $p$) and $2$-qubit controlled-phase gates (denoted as $Cp$). 

%-----------------------------
%
Here, we propose two encoding techniques as follows. 

\subsubsection{Quantum Approximate Time Evolution (QATE)}

The quantum approximate time evolution (QATE) encoding method uses $n-1$ number of phase gates, and $^{n-1}\mathcal{C}_2$ controlled phase gates to approximate the matrix exponential for all $2^{n-1}$ samples of phases. The pseudo-code of the QATE algorithm is given below. 

\begin{algorithm}[htb!]
\caption{Proposed QATE Encoding}
\begin{algorithmic}[1]
\Procedure{QATE}{$n$}
    \State $\boldsymbol{\theta}_P \gets$ Primary angles
    \State $\boldsymbol{\theta}_C \gets$ Composite angles
    \For {$i \gets 0:n-1$}
        \State $QC.P(-\boldsymbol{\theta}_P[i],QR[i+1])$ \Comment{Phase gate encoding}
    \EndFor
    \For {$i, x$ in enumerate(sequence)}
        \State $index = i$
        \State $QC.Cp(-\boldsymbol{\theta}_C[index], QR[x[0]], QR[x[1]])$ \Comment{Controlled-phase gate encoding}
    \EndFor
    \State \textbf{return} $\mathbf{U}_K$
\EndProcedure
\end{algorithmic}
\label{QATE-algo}
\end{algorithm}

\textit{Note on Algorithm-\ref{QATE-algo}:} In the QATE encoding method, we first allocate the $\theta_0$ sample as a global phase in the initialization of the algorithm, which does not require any additional resources. The working principle of the QATE algorithm is discussed as follows.

\begin{itemize}
    \item We will consider a $n$ qubit circuit, to encode the kinetic energy function $\mathbf{K}$ in the diagonal unitary matrix form as an evolution operator. Here, the parameters ${\theta_j=-\frac{p_j^2 \Delta t}{2}}$ are the samples of the Kinetic energy samples multiplied by the evolution time. 

    \item Using a QPA architecture, one can see some indices of the unitary matrix can be uniquely prepared by placing phase gates in the quantum circuits. The other indices of the unitary diagonal matrix are composed of the linear combinations of those phase angles. Based on these observations, we divide the set of angles $\{\theta[j]\}$ into two sets. One set is called primary angles, which is stored in an array $\boldsymbol{\theta}_P$. Note that, we will subtract the global phase here, in case the first phase component (i.e., $\theta[0]$) is encoded as a global phase. 

    \item We have observed that if we assign phase gates from the second qubit onward till the last qubit (from up to down approach), the positions (or indices along the diagonal) which are uniquely represented with those phase angles ( while other indices are linear combinations of the phase angles due to the tensor product representation of all gates) can be found as follows:

    \begin{align}
        \begin{cases}
    2^k, & \text{for}~~ k=0: n-1\\
    2^k -1,& ~\text{for}~~{k= n}\label{seq}. 
    \end{cases}
    \end{align}

    For example, if $n=3$, the unique functional values in the diagonal of the overall unitary operator can be found at indices $2,4,7$.

    \item Now, suppose one is interested in encoding the vector in the principal diagonal of the unitary matrix as $1:1$ correspondence with the angles ${\theta_j}$ with the functional value ${e^{-i\theta_j}}$ for $j$ varies from $0:2^n-1$. The first functional value is already encoded due to global phase $\theta_0$, which is performed to eliminate any gate requirement for encoding the first functional point $e^{-i\theta_0}$. As a consequence, all other functional values are biased with the angle, which needs to be subtracted in successive encoding. To perform this manipulation, we define dummy variables $\{a_i\}$ in relation to the phase angles $\{\theta_j\}$.   
    
    \item Our first approach is to encode the primary angles, which are $\{\theta_j\}$ where $j$ follows \eqref{seq}. They can uniquely encode functional values. Accordingly, we define the dummy variables (with their suffix named with unique indices) $\{a_i\}$ by adjusting with the global variables. For the $n=5$ qubit system, one can follow \eqref{encod_fig}. Here, we choose $\theta_2,\theta_4,\theta_8,\theta_{15}$ to encode their corresponding functional values in the diagonal unitary operator. Note, that the QPA architecture takes half of the samples (here $16$ samples are considered instead of $32$) and reflects the other half by exploiting the bi-symmetric structural advantage of the operator.  

    \item Now, already the other indices in the principal diagonal vector of the unitary matrices are impacted by the primary angles. We have observed that for $n=5$, the positions $6,7,10,11,12,13$ are impacted in the diagonal which are in fact related to the combinations of the primary angles. One can easily find this relation for $n$ qubit system by training several values of $n$. Note that, as these positions are impacted, we can manipulate them with entangled controlled phase gates and retain the actual functional values. For example, for $n=5$, one can see in \eqref{encod_fig}, we have encoded $\theta_6,\theta_7,\theta_{10},\theta_{11}, \theta_{12}, \theta_{13}$ with their functional correspondence in the diagonal of the unitary matrix. We have used the dummy variables $a_6,\dots,a_{13}$ here for adjusting the primary angles and the global phase. We call these angles the composite angle. 

    \item All the primary angles are stored in an array $\boldsymbol{\theta}_P$ and all the composite angles needs to be stored in another array, namely $\boldsymbol{\theta}_C$ which are used in the QATE algorithm. Note, that with QPA architecture, the QATE encoding requires $n-1$ phase gates, and there will be $^{n-1}C_2$ possible combinations of controlled phase gates. Hence, the size of $\boldsymbol{\theta}_P$ is $n-1$, and it is $^{n-1}C_2$ for $\boldsymbol{\theta}_C$.

\end{itemize}

% We have observed that if we assign phase gates from the second qubit onward till $(n-1)^{th}$ qubit, the index at $2^k$ for $k=1$ to $n-1$ (and take the ceiling function of the last element, i.e. $ceil(2^{n-1})$ to consider it as integer-valued) in the array $\boldsymbol{\theta}$ produces unique matrix exponential $e^{-i\theta_k}$ along the diagonal of $\mathbf{D}_{\theta}$. For example, $\theta[2],~\theta[4],~\theta[8],~\dots, \theta[127]$ for $n=7$ are the unique samples for which $e^{-i\theta_k}$ is an unique element at $2^k$ location in $\mathbf{D}_{\theta}$. All other positions $(j,j)$ along the diagonal of the matrix $\mathbf{D}_{\theta}$ have exponential composite angles due to the tensor product of the phase gates. Subtracting the global phase from these unique phase samples are stored in the variable $\boldsymbol{\theta}_P$, namely primary angles. 
% Similarly, if we employ controlled-phase gates from the second qubit onward till $(n-1)^{th}$ qubit there are $^{n-1}C_2$ possible combinations of $Cp$ gates. As a consequence, we can make $^{n-1}C_2$ unique exponential $e^{-i\theta_j}$ ($j=1$ to $^{n-1}C_2$) along the diagonal of $\mathbf{D}_{\theta}$ (while no phase gates are encoded). In the presence of both phase gates and controlled phase gates, the unique samples for the $Cp$ gates can be identified by subtracting the redundant phases (coming from combinations of the primary phases) and the global phase. These angles are called composite angles, denoted as $\boldsymbol{\theta}_C$. 

One can find (by parity checking of the binary representation of the indices where the qubits are placed) that, there is a relation between the primary angles and composite angles related to the placement of phase gates in certain qubits. We have found that phase angles for the $Cp$ gates can be obtained as a function of the sequence generated from primary angles. For example, in case $n=3$, the phase of the $Cp$ gate will be $\boldsymbol{\theta}_C[1]=\theta[1] - (\boldsymbol{\theta_P}[1]+\boldsymbol{\theta_P}[2])-\theta[0]$, where $\theta[0]$ is the global phase, $\boldsymbol{\theta_P}[1],\boldsymbol{\theta_P}[2]$ are primary angles. QATE algorithm is further explained in the result section.

\subsubsection{Quantum Windowing Evolution (QWE)}

One may find interest in the study of the time evolution operator for certain regions of interest in the momentum domain. It can be either estimating the portion of the wave function where the probability amplitudes are at their peak or at locations where the amplitudes are low.  We may often seek to know in certain regions of the lattice in the momentum domain and the corresponding kinetic energy (KE) time-evolution operator that accurately gives only those portions. In such a scenario, we can reduce the quantum gate complexity to a linear scale, i.e., $\mathcal{O}(n)$, by windowing the region of interest to certain lattice points for a close approximation of the evolution operator. The pseudo-code for the proposed quantum windowing evolution (QWE) encoding is described as follows. It is to be noted that the windowing operation is being done in the momentum domain, where the KE operator operates.

\begin{algorithm}[htb!]
\caption{Proposed QWE Encoding}
\begin{algorithmic}[1]
\Procedure{QWE}{$n$,$\boldsymbol{\theta}_W$, $K$}
    \If{$k \in {K_1} < K$ is in $\boldsymbol{\theta}_W$}
        \State Sort $\boldsymbol{\theta}_{{K_1}}$
        \For{$\textbf{each}$ $k \in {K_1}$}
            \State perform $QC.P(\boldsymbol{\theta}_{{K_1}}[k], QR[k])$
        \EndFor
    \Else
        \State Sort $\boldsymbol{\theta}_m$ for $m = K - {K_1}$
        \For{$k = 1$ to $n$}
            \State perform $QC.Cp(\boldsymbol{\theta}_m[k], QR[1], \boldsymbol{\theta}[m-1])$
        \EndFor
    \EndIf
    \State \textbf{return} $\mathbf{U}_K$
\EndProcedure
\end{algorithmic}
\label{QWE-algo}
\end{algorithm}

\textit{Note on Algorithm-\ref{QWE-algo}:}

In the QWE encoding, we decide a window of samples having length $K \in  \tilde{\mathcal{O}}(n)$, denoted as $\boldsymbol{\theta}_W \subset \boldsymbol{\theta}$. We find the angles $\theta_k$ ($k=1,\dots, K_1<K$) for which $e^{-i\theta_k}$ is an element in the diagonal of $\mathbf{D}_{\theta}$, and we sort all such angles in $\boldsymbol{\theta_{{K_1}}}$ which can be created by the phase gates. Similarly, we find the remaining $m=K-{K_1}$ positions which can be created as entangled positions by the CNOT gates and sort them as $\boldsymbol{\theta}_m$. The windowing encoding method follows similar embedding as in QATE encoding to prepare the quantum circuit. However, the total number of gates required in this procedure is kept within $\mathcal{O}(n)$ to realize $n$-amplitudes instead of the $n^2$ found in the entire lattice. However, the QWE encoding may not be used for the time evolution of a wave packet in the displacement domain. The QWE algorithm can be a low-cost version of the QATE algorithm, where one can play with how many $Cp$ gates need to be placed in the circuit thereby trade-off between complexity and accuracy.

%--------------------------------------------------
\section{Results and Discussions}
%----------------------------------------

In this result section, we demonstrate numerical simulation results performed on an IBM quantum machine and quantum simulator. In the below subsections, we show the implementation of the kinetic energy operator for the $5$ qubit system as an example using the QATE and QWE encoding method. Further, we portray the time evolution of a Gaussian wave function following (\ref{time_ev}) with the proposed QPA and QATE encoding method (compared with classical simulation). The performance of the proposed framework of the time evolution operator is measured with fidelity and complexity as the key parameter indices (KPIs) and also compared with the state-of-the-art method. In the below Table-\ref{table_param}, we have shown the choice of parameters taken in the simulation environment. 
\begin{table}[htb!]
    \centering
\begin{tabular}{|c|c|}
    \hline
    Quantum Simulator & Statevector Simulator, Qasm Simulator \\
    % \hline
    % Quantum Machine & IBM nairobi\\
    \hline
    Number of qubits ($n$) & $3-10$\\
    \hline
    Number of shots & $1000-10000$\\
    \hline
    Evolution time ($\Delta t$) & $0.1$ second\\
    \hline
    Range of space coordinate ($x$) & $[-10, ~10]$ in $A^o$ (Angstrom)\\
    \hline
    Sampling interval ($dx$) & $0.625$\\
    \hline
    Wave packet encoding method & amplitude encoding\\
    % \hline
    \hline 
\end{tabular}
    \caption{Simulation parameters}
    \label{table_param}
\end{table}

%--------------------------------------------------
\subsection{Proposed Experimental Procedure for Kinetic Energy Operator Design}
%-------------------------------------------------

The algorithm in \ref{algo-QPA} shows how a pyramid-like architecture of CNOT gates helps to design a bi-symmetric operator. In our case, the kinetic energy operator has a plane of reflection about the skew-diagonal. As a consequence, the first $\frac{N}{2}$ elements of the diagonal matrix are the reflection of the second $\frac{N}{2}$ elements of the matrix. However, constructing the matrix with desired functional values in $a_{i, i}$ positions for $i\in [N]$ requires the proper choice of phases in the one-qubit phase gates ($P$) and two-qubit controlled-phase gates ($Cp$) while placing on a particular qubit in the circuit. Here, we demonstrate the experimental set-up for $5$ qubit as an example.

The discrete kinetic energy values in vector form $\mathbf{k}=\left[KE(p_0), ~KE(p_1),~\dots, KE(p_{31})\right]$ can be encoded in angles $\boldsymbol{\theta}=\left[\theta_0,\theta_1,\dots,\theta_{31}\right]$ for time segment $\Delta t$ as $\boldsymbol{\theta}=\mathbf{k}\Delta t$. As a consequence, the unitary operator for the Hamiltonian operator $\mathbf{k}$ denoted as $\mathbf{U}_k=e^{-i \mathbf{diag (k)}\Delta t}$ becomes a function of $\boldsymbol{\theta}$, expressed as 
% \begin{align}
    $\mathbf{U}_k := \exp{-i \mathbf{diag}\boldsymbol{(\theta)}}$.
% \end{align}
As $\mathbf{U}_k$ is a bi-symmetric operator about its skew-diagonal, we design a quantum circuit for the first half of samples (i.e., $\theta_0,\dots, \theta_{15}$) using the QPA algorithm. With QPA, we have implemented the kinetic energy operator using the combination of phase gates and controlled phase gates. For the $5$ qubit quantum circuit, our choice of the phase angles (after adjustment with global phases and employing the QATE encoding method) are given in (\ref{encod_fig}).

\begin{figure*}[htb!]
\hrule
\begin{align}
    a_0 &:= \theta_0 ~~~(\text{can be assigned as global phase})\nonumber\\
    a_2 &:= \theta_2 - a_0 ~~~(\text{ phase for $QR[1]$})\nonumber\\
    a_4 &:= \theta_4 - a_0 ~~~(\text{phase for $QR[2]$})\nonumber\\
    a_8 &:= \theta_7 - a_0 ~~~(\text{phase for $QR[3]$})\nonumber\\
    a_{15} &:= \theta_{15} - a0~~~(\text{phase for $QR[4]$})\nonumber\\
% \end{align}
% % \end{figure*}
%
% % Similarly, using the QATE encoding method, the angles for the controlled phase gates are given as follows:
% % % \subsubsection{Use of Controlled-phase gates:}
% % % 
%
% % \begin{figure*}
% \begin{align}
a_6&:= \theta_6 - (a2+a4) - a0 ~~~(\text{phase for the $Cp$ gate between $QR[1]$, and $QR[2]$})\nonumber
\\
a_7&:= \theta_7 - (a8+a15) - a0 ~~~(\text{phase for the $Cp$ gate between $QR[3]$, and $QR[4]$})\nonumber
\\
a_{10}&:=\theta_{10}-(a2+a8) - a0~~~(\text{phase for the $Cp$ gate between $QR[1]$, and $QR[3]$})\nonumber
\\
a_{11}&:= \theta_{11} - (a4+a15) - a0~~~(\text{phase for the $Cp$ gate between $QR[2]$, and $QR[4]$})\nonumber
\\
a_{12}&:= \theta_{12} - (a4+a8) - a0 ~~~(\text{phase for the $Cp$ gate between $QR[2]$, and $QR[3]$})\nonumber
\\
a_{13}&:=\theta_{13} -(a2+a15) - a0~~~(\text{phase for the $Cp$ gate between $QR[1]$, and $QR[4]$}).
\label{encod_fig}
\end{align}
\hrule 
\end{figure*}

Using the above angles following the QATE procedure, we simulate the Kinetic energy operator for $5$ qubit quantum circuit on an IBM machine using the 'Statevector' quantum simulator as shown in Fig. \ref{qubit5}. Here, we have used $4$-phase gates and $6$ number of $Cp$ gates to simulate the operator $\mathbf{U}_K \in \mathbf{C}^{32\times 32}$. 

\begin{figure*}[htb!]
\centering
    \colorbox[RGB]{239,240,241}
     % \colorbox[RGB]{255,153,255}
  {\begin{minipage}{.97\linewidth}
    \centering
    \includegraphics[width=\linewidth]{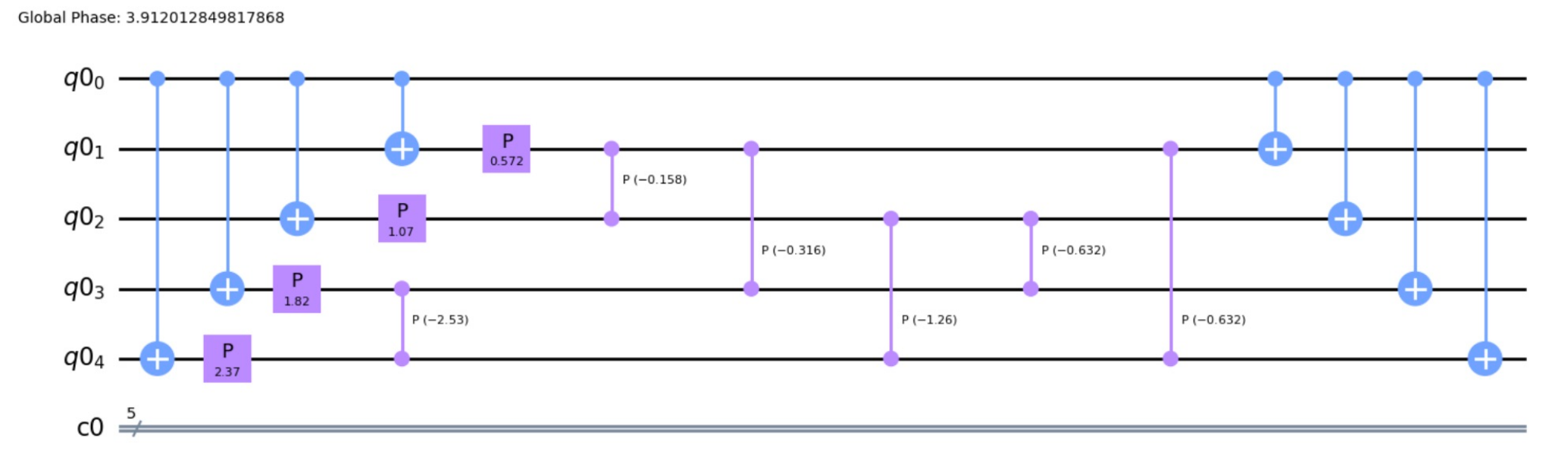}
    \caption{Kinetic energy evolution operator designed for $5$ qubit system with proposed QPA and QATE encoding using QISKIT script on IBM 'Statevector' quantum machine.}
    \label{qubit5}
    \end{minipage}}
\end{figure*}
One can simulate the Hamiltonian $\mathbf{U}_k$ using a classical procedure with $2^n$ samples of angles (note that here angle refers to the quantity "kinetic energy $\times$ time" in the equation $e^{-i \mathbf{K} t}$). The discretized kinetic energy as a function of momentum is shown in Fig. \ref{res1}.$a$. We show the plot of the diagonal array in the quantum simulated unitary matrix $\mathbf{U}_k$ designed for $5$ qubits in Fig.\ref{res1}.$b$. Note that, the QPA algorithm using the QATE encoding technique simulates the kinetic energy operator arbitrarily close to the classical simulated operator, which are overlapped in the given figure. However, the QWE encoding technique can only simulate the part of an operator within our region of interest with a lesser number of quantum gates. 

\begin{figure*}[htb!]
\centering
    \colorbox[RGB]{239,240,241}
     % \colorbox[RGB]{255,153,255}
  {\begin{minipage}{.97\linewidth}
  \centering
    \includegraphics[width=0.95\linewidth]{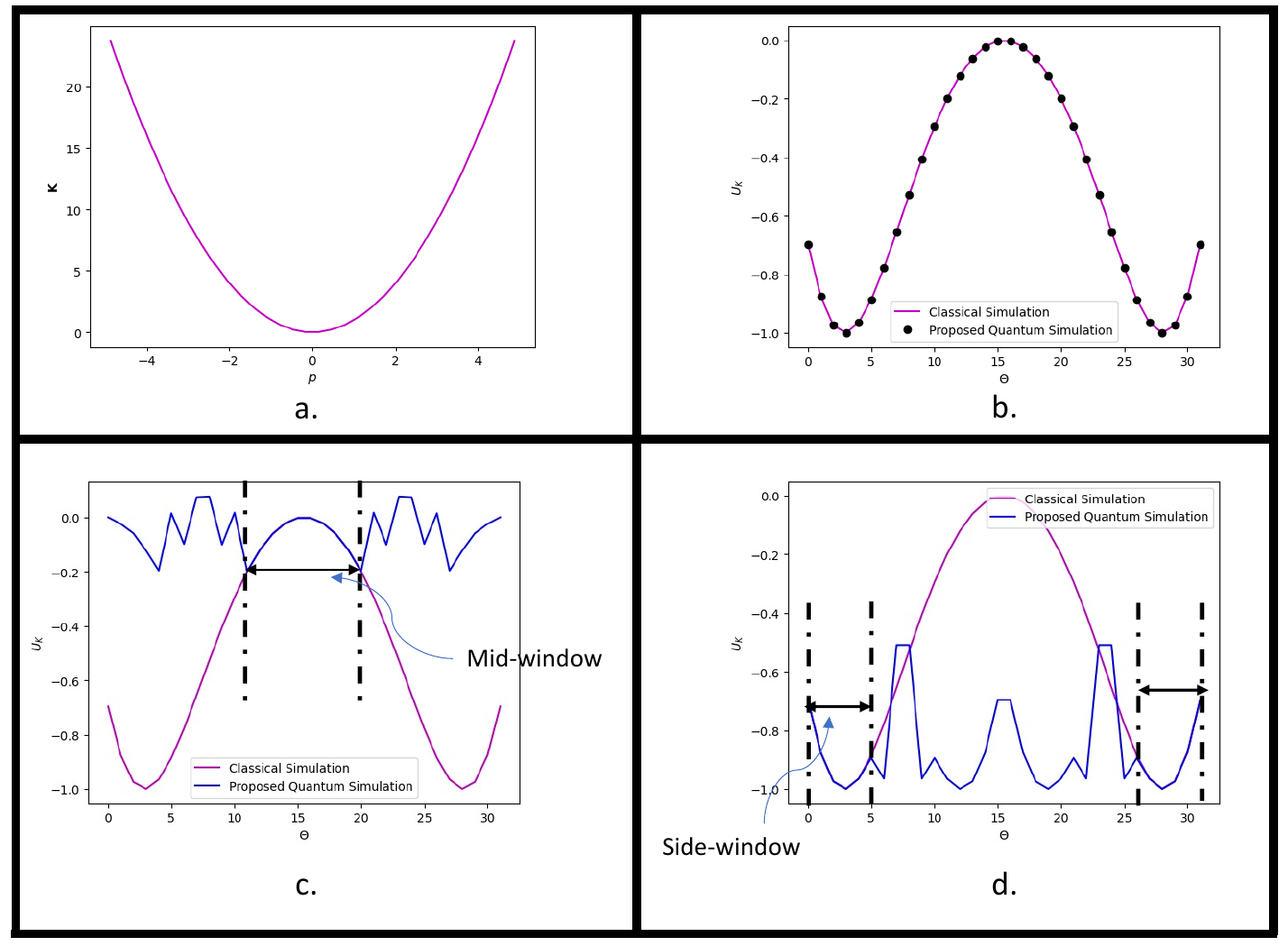}
    \caption{\textbf{Realization of the QATE and QWE algorithm with $5$ qubit register:} \textbf{a.} Plot of kinetic energy ($\mathbf{K}$) as a function of momentum ($p$), \textbf{b.} Simulation of Kinetic energy operator with QATE algorithm. Here, four-phase gates and five $Cp$ gates are used to simulate $\mathbf{U}_{KE}$. Note that, the quantum simulation is arbitrarily close to the classical simulation result using the QATE procedure. \textbf{c.} Simulation with windowing encoding procedure is performed in the mid-region of the Kinetic energy function. Here, we realize four amplitudes near the mid-windowed evolution $\mathbf{U}_{KE}$ using four phase gates and one $Cp$ gate. \textbf{d.} Quantum simulation is performed for the side window of the wave function using three-phase gates and one $Cp$ gate.} 
    \label{res1}
    \end{minipage}}
\end{figure*}

In Fig.\ref{res1}.$c.$, we have shown the mid-windowing method. Here, we estimate $4$ amplitudes in the mid-region of the evolution operator. For the mid-windowing evolution, our choice of the angles (with adjustment following QWE algorithm) are as follows: $a_0=\theta[0],~a_{15}= \theta[15]-a_0,~a_{11}= \theta[11]-a_{15}-a_0,~a_{12}= \theta[12]-a_{11}-a_0,~a_{13}= \theta[13]-a_15-a0,$ and $~a_{14}=\theta[14]-a_{11}-a_{12}-a_{13}-a_0$. Here, phase gates are placed (following QWE encoding) from second to fifth qubit register with phases $-a_{13},-a_{11},-a_{12},-a_{15}$ respectively, and a controlled phase gate is applied between second and third qubit register to create an entanglement with phase $-a_{14}$. 
Similarly, we show side-window encoding when we are interested in the terminal region of the evolution operator (e.g., near the valence energy states) in Fig. \ref{res1}.$d$. 
Here, we encode the side windowing evolution operator with the angles $a_0=\theta[0], ~ a_1=\theta_2-a_0,~a_4=\theta[4]-a_0,$ and $a_1=\theta[1]-a_2-a_3-a_4-a_0$. Here, we place the phase gates from the second qubit to the fourth qubit with angles $-a_2,~-a_4,~-a_3$ respectively, and we place a controlled phase gate with angle $-a_1$ between the second and third qubit registers. Note that, here we have used $\theta[k]$ and $\theta_k$ interchangeably with the same notion of a sample of phase at $k^{th}$ instant.  
%-----------------------------------------------------
\begin{figure*}[htb!]
% \colorbox[RGB]{239,240,241}
\colorbox[RGB]{153,255,255}
  {\begin{minipage}{.97\linewidth}
    \centering
    \includegraphics[width=0.98\linewidth]{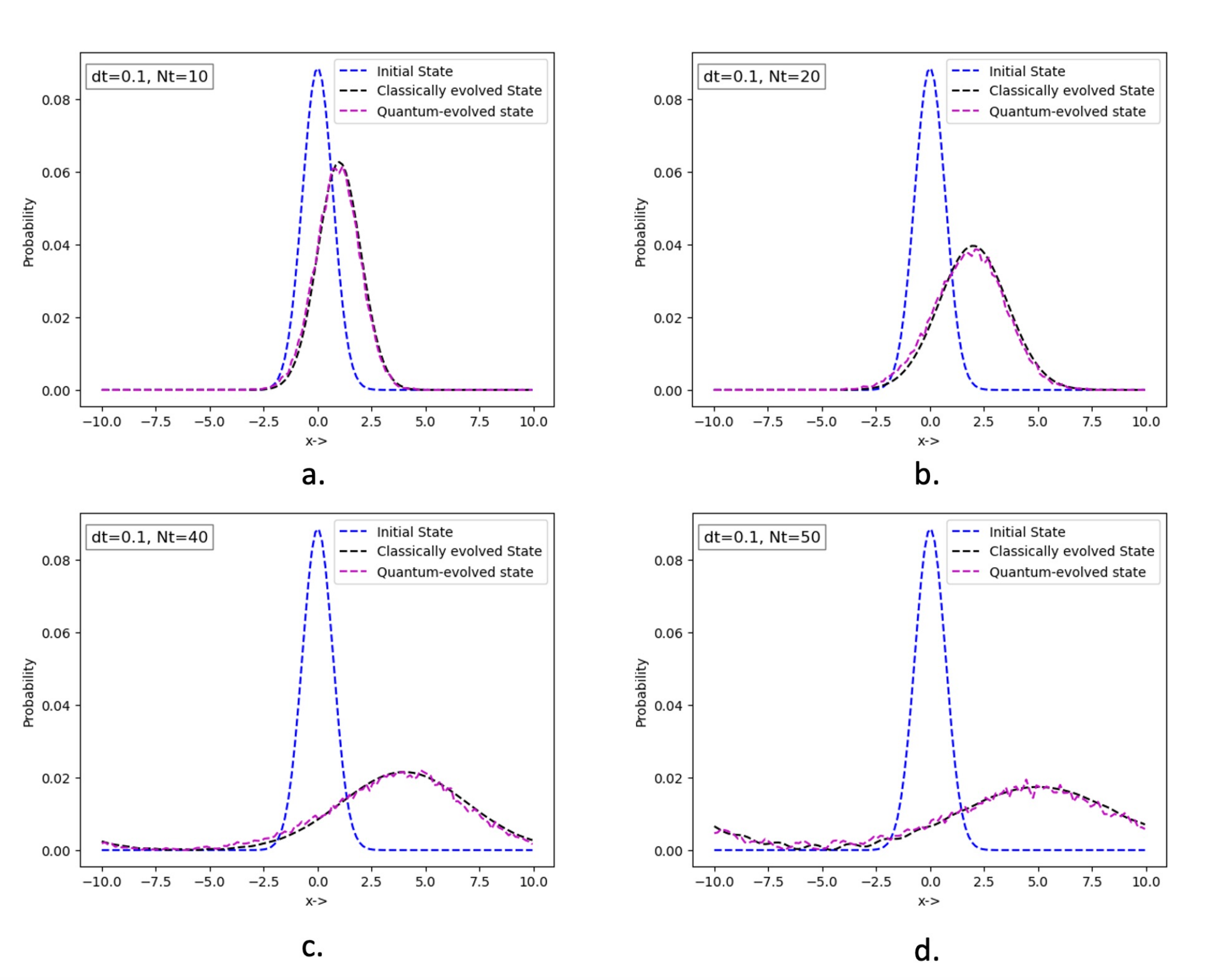}
    \caption{Time evolution of a Gaussian wave packet in the presence of kinetic energy operator in a unit step potential well: Time evolution is shown for an evolution time of $\Delta t$, and varying Trotterization step size $Nt$. The figure denotes in \textbf{a.} time evolution with $Nt=10$ steps, \textbf{b.} time evolution with $Nt=20$ steps, \textbf{c.} time evolution with $Nt=40$ steps, and \textbf{d.} time evolution with $Nt=50$ Trotterization steps. The quantum evolved state obtained with proposed QPA and QATE encoding for $5$ qubits approaches the classically evolved state. Here, the unit of distance is in Angstrom.}
    \label{res2}
    \end{minipage}}
\end{figure*}

A Gaussian wave packet with the form $\boldsymbol{\psi}_0=e^{-\frac{x^2}{2}}e^{ik_0x}$ is considered for the study of its time evolution dynamics which is often chosen as initialization \cite{goldberg1967computer, garraway1995wave}. The wave function is normalized and embedded as the initial quantum state in the qubit registers using the amplitude encoding method. The kinetic energy operator $\mathbf{U}_k (\hat{p})$ designed with the proposed QPA method and QATE encoding is applied on the initial state to get the final state $\boldsymbol{\psi}_t$. As the kinetic energy is a function of the momentum ($p$), and the wave function is defined in terms of the space coordinate ($x$), we employ the quantum Fourier transform (QFT) and its inverse (IQFT) to represent the overall operator in the displacement domain (space coordinate) as 
\begin{align}
    \boldsymbol{\psi}_t= \mathbf{U}_{QFT}\mathbf{U}_k(\hat{p})\mathbf{U}_{IQFT} \boldsymbol{\psi}_0.
\end{align}
In Fig. \ref{res2}, we have shown the time evolution of the Gaussian wave packet performed on IBM 'Qasm simulator' with $5$ qubit registers. We have considered an evolution time of $\Delta t=0.1$, and varying Trotterization steps ($10-50$). The simulation is performed for $10000$ quantum shots to get the probability histogram of the measurement bases ($00000$ to $11111$). With $5$ qubit registers, the quantum-evolved state approaches the classically evolved state with fidelity of $0.73$ approximately. However, by increasing the qubit size the fidelity can be further improved as discussed in the next subsection. Here, we have seen that the quantum-evolved state with our proposed quantum framework is very near to the classically evolved state which has potential usages for the study of dynamics of various wave functions in physics and chemistry. Note that, the unit of time ($\Delta t$) and space ($dx$) for the study of atomic or orbital energy levels may be in atomic unit ($au$).
%------------------------------------------------

\textit{\textbf{Note:}} For the time evolution of a wave packet in the coordinate domain, we need to perform the QFT (to transform the Kinetic energy from the momentum domain to the coordinate domain). For this, we rely on the QATE encoding procedure which encodes the kinetic energy for the entire domain of interest. However, the QWE encoding technique is limited within the momentum domain for the applications within small regions of interest. Applying the QFT in the quantum circuit with QWE encoding may not provide perfect time evolution. Also, the application of the time evolution operator in the momentum domain directly is also limited fat present.

%------------------------------------------------

\subsection{Fidelity comparison}
\begin{figure}
    \centering
    \includegraphics[width=0.9\linewidth]{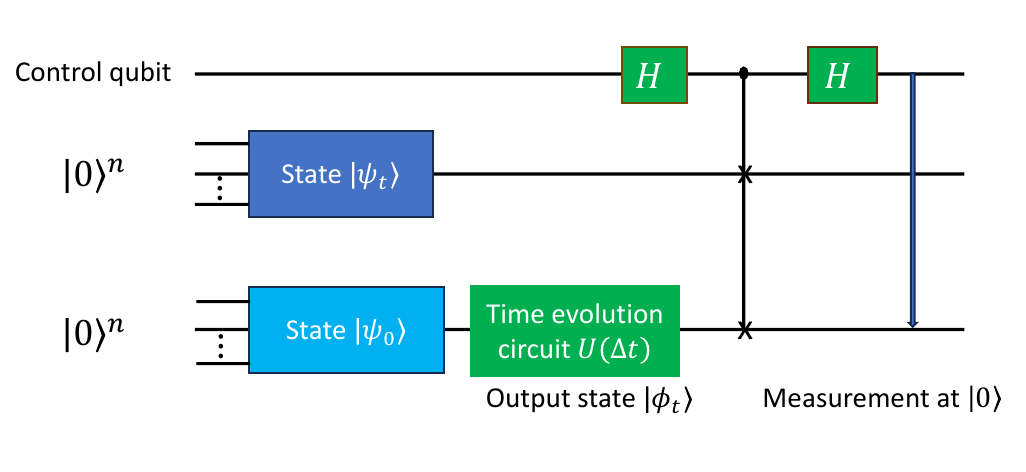}
    \caption{Quantum swap-test circuit}
    \label{swap_ckt}
\end{figure}

We perform a quantum fidelity test using the quantum swap circuit to measure the accuracy in terms of the inner product $\braket{\psi|\phi}$, where $\psi$ is the actual (or target) state and $\phi$ denotes the estimated (or output) state. The swap test circuit takes two input states $\psi_t$, and $\phi_t$ and outputs a probability in computational basis $\ket{0}$ as 
\begin{align}
    Pr(first ~qubit=0) = \frac{1}{2} + \frac{1}{2}\vert \braket{\psi_t|\phi_t} \vert^2. 
\end{align}
In Fig. \ref{swap_ckt}, we have shown a schematic of the swap-test circuit to measure the distance between the quantum evolved state ($\phi_t$) and the target state ($\psi_t$). The target state is a quantum state which is obtained by doing amplitude encoding of the classically evolved state. The output state $\phi_t$ is the quantum-evolved state which is obtained by the implementation of the proposed quantum time evolution operator ($\mathbf{U}(\Delta t)$) applied on the initial quantum state ($\psi_0$). The swap-test circuit is composed of Hadamard gates, control qubit, input states, and the swap gate as shown in Fig. \ref{swap_ckt}. We have performed our experiments for varying qubit size ($n$) to test the fidelity of the proposed quantum circuit and also compared our result with the Shokri et al. method \cite{shokri2021implementation}. For the fidelity experiments we have kept, the evolution time $\Delta t=0.1$ (with varying Trotter step size) and various measurement shots (depending on the qubit size) are performed on the IBM Qasm simulator.  It is observed that our proposed quantum circuit possesses a fidelity of $0.73$ with $3$ qubits, and it reaches a fidelity of $0.99$ with $9$ or more qubits. As compared to the state of the art ($0.89$ approximately with $9$ qubits), the proposed quantum circuit shows significant improvement in the fidelity (with a fidelity of $0.99$).

%-------------------------
% \begin{figure*}[h!]
% \centering
%  % \colorbox[RGB]{239,240,241}
%   {\begin{minipage}{.95\linewidth}
% \begin{subfigure}{.5\textwidth}
%     \includegraphics[width=0.99\linewidth]{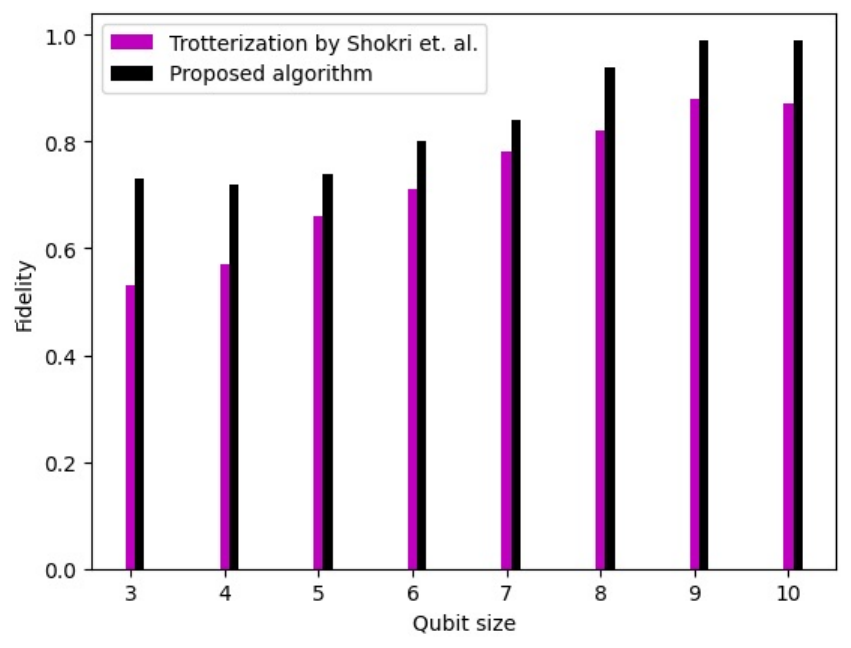}
%     \caption{Quantum fidelity with varying qubit size in IBM Qasm simulator}
%     \label{fid_res}
% \end{subfigure}
% \begin{subfigure}{0.5\textwidth}
% \centering
%     \includegraphics[width=0.99\linewidth]{}
%     \caption{Computational complexity (total gate counts) with varying qubit size}
%     \label{gate_complex}  
% \end{subfigure}
% \caption{Comparison of quantum fidelity and gate complexity}
% \label{complexity_res}
% \end{minipage}}
% \end{figure*}

\begin{figure}[h!]
\centering
    \includegraphics[width=0.95\linewidth]{}
\caption{Quantum fidelity with varying qubit size in IBM Qasm simulator}
\label{fid_res}
\end{figure}

%--------------------------------------------------
\section{Computational complexity and error analysis}
% \section{Complexity Analysis}
The computational gate complexity of the Trotter-Suzuki method for the Hamiltonian simulation is of $\Theta(n^2)$ with a qubit size of $n$. In recent literature Shokri et. al. \cite{shokri2021implementation}, have shown an implementation (up to $5$ qubits) which can take a total number of $3n + ^n\mathcal{C}_2$ gates.  Our proposed QPA-based algorithm with the QATE encoding method further reduces the complexity, given in the below Lemma. 
\begin{lemma}\label{gate_compl}
    Given an $n$-qubit quantum circuit, the QATE algorithm requires $\mathcal{O}(n)$ $1$-qubit gate, and $^{n-1}\mathcal{C}_2 + 2(n-2)$ number of $2$-qubit gates to design a bi-symmetric diagonal evolution operator. 
\end{lemma}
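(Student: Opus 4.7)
The proof is a direct gate-counting argument based on the pseudocode of Algorithms \ref{algo-QPA} and \ref{QATE-algo}, which together define the encoding circuit. The plan is to tally three categories of elementary gates separately and then collect them into the claimed totals.

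For the single-qubit gates, I would inspect the first for-loop of Algorithm \ref{QATE-algo}: it places one phase gate on each of the registers $QR[1], \ldots, QR[n-1]$, since $QR[0]$ is reserved as the common control of the QPA ladder. This yields exactly $n-1$ one-qubit gates, which is $\mathcal{O}(n)$, establishing the first half of the lemma. For the controlled-phase gates, the second for-loop of Algorithm \ref{QATE-algo} iterates over the set of unordered pairs drawn from the $n-1$ non-control qubits and places one $Cp$ gate per pair, giving $\binom{n-1}{2} = {}^{n-1}\mathcal{C}_2$ controlled-phase gates. The composite-angle calculations exemplified in \eqref{encod_fig} merely fix the phase of each $Cp$ gate so as to cancel the biases contributed by the primary-angle phase gates, and therefore do not affect the count.

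For the CNOT skeleton, Algorithm \ref{algo-QPA} places two CNOTs per iteration $j = 1, \ldots, n-1$, giving a naive total of $2(n-1)$. Exploiting the involutivity $\mathbf{C}^2 = \mathbf{I}$, the innermost pair shared between the two mirrored halves of the pyramid (visible in Fig.~\ref{qpa-fig}) can be absorbed, removing two CNOTs and leaving $2(n-2)$. Summing this with the controlled-phase contribution yields $^{n-1}\mathcal{C}_2 + 2(n-2)$ two-qubit gates, matching the lemma.

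The main obstacle, in my view, is justifying the reduction from the naive $2(n-1)$ down to $2(n-2)$ CNOTs rather than performing the other two tallies, which are immediate from inspecting the for-loops of Algorithm \ref{QATE-algo}. A careful look at the pyramid geometry of Fig.~\ref{qpa-fig}, together with $\mathbf{C}^2 = \mathbf{I}$, should make the cancellation transparent, after which the lemma follows by a simple addition.
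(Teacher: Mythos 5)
Your tallies of the one-qubit and controlled-phase gates coincide exactly with the paper's own proof: $n-1$ phase gates on $QR[1],\dots,QR[n-1]$ giving the $\mathcal{O}(n)$ single-qubit count, and $^{n-1}\mathcal{C}_2$ $Cp$ gates, one per unordered pair of non-control qubits. Where you depart from the paper is the CNOT skeleton, and that is where the gap lies. The paper's proof simply states that the exchange operator of Proposition~\ref{prop1} costs $2(n-1)$ CNOTs and then, without comment, writes $2(n-2)$ in the final total --- an internal discrepancy of two gates. You have correctly noticed that something must account for this difference, but the cancellation you propose does not work. For the innermost pair of CNOTs (one from each ladder) to annihilate via $\mathbf{C}^2=\mathbf{I}$, they would have to be adjacent, or separated only by gates that commute with them. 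They are not: the encoding block $\mathbf{F}$ sits between the two ladders and places a phase gate on the very qubit that those CNOTs target, and a phase gate on the target wire of a CNOT does not commute with it (since $\mathrm{diag}(1,e^{i\theta})\,\mathbf{X}\neq \mathbf{X}\,\mathrm{diag}(1,e^{i\theta})$ for generic $\theta$). Worse, if that pair \emph{did} cancel, the conjugation $\mathbf{A}\mathbf{F}\mathbf{A}^{\dagger}$ would lose the row/column exchange on that qubit and the resulting operator would no longer be bi-symmetric, defeating the purpose of the QPA construction.

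So the step you flag as the main obstacle is indeed the unproved one, and your attempted repair is not sound. The honest conclusion from the circuit as described is $2(n-1)$ CNOTs, i.e.\ a two-qubit total of $^{n-1}\mathcal{C}_2 + 2(n-1)$; reconciling this with the lemma's stated $^{n-1}\mathcal{C}_2 + 2(n-2)$ would require either a concrete circuit identity that the paper does not supply (for instance, showing that one end of the ladder can be omitted because the corresponding qubit carries no phase gate between the two ladders) or a correction to the constant in the statement. Either way, asserting the cancellation from $\mathbf{C}^2=\mathbf{I}$ alone, without verifying adjacency or commutation, is a genuine gap.
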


\begin{proof}
    The number of $Cx$ gates required for the exchange operator as discussed in Proposition-\ref{prop1} to prepare the bi-symmetric pattern in the diagonal matrix is given by $2(n-1)$. The prime locations for an $n$-qubit quantum circuit are created with $n-1$ phase gates. The number of combinations of controlled phase gates that can be placed in the quantum circuit for generating the bi-symmetric pattern following the QATE algorithm is given by $^{n-1}\mathcal{C}_2$. Hence, the total number of single-qubit, and $2$-qubit quantum gates are given by $\mathcal{O}(n)$, and $^{n-1}\mathcal{C}_2 + 2(n-2)$. 
\end{proof}
% \begin{proof}
%     The number of $Cx$ gates required for the exchange operator as discussed in Proposition-\ref{prop1} to prepare the bi-symmetric pattern in the diagonal matrix is given by $2(n-1)$. The prime locations for an $n$-qubit quantum circuit are created with $n-1$ phase gates. The number of combinations of controlled phase gates that can be placed in the quantum circuit for generating the bi-symmetric pattern following the QATE algorithm is given by $^{n-1}\mathcal{C}_2$. Hence, the total number of single-input, and $2$-input quantum gates are given by $\mathcal{O}(n)$, and $^{n-1}\mathcal{C}_2 + 2(n-2)$. 
% \end{proof}

The gate complexity as compared to \cite{shokri2021implementation} has been improved for $1$ qubit quantum gates. The overall fidelity of the quantum circuit is slightly increased in the proposed algorithm with the QATE encoding method. In problems, where one is interested in a specific region of the evolution operator, the QWE technique is preferred. To realize $n$-states, one can restrict the gate complexity to $\mathcal{O}(n)$ with the proper choice of phase and controlled phase gates. In fact, the least-square approach may also be adapted to find an approximation of the kinetic energy operator by revising our QWE approach with modified phase angles which can be a trade-off between approximation and complexity (between $\mathcal{O}(n)$ and $\mathcal{O}(n^2)$).   

\subsection{Gate counts}
% In Fig. \ref{gate_complex}, we show the plot for gate complexity measured in terms of the total number of quantum gates used to design the time evolution operator as a function of input qubit size ($n$). 
The method by Shokri et. al. in \cite{shokri2021implementation} has shown the implementation of the time evolution operator on a real quantum machine for $4$ qubit register which takes a total number of $18$ quantum gates. We compare our proposed quantum algorithm with the QATE encoding method (which approximates the kinetic energy operator for all samples) with that of \cite{shokri2021implementation} (which is $3n+ ^n\mathcal{C}_2$ for $n$ qubit circuit). For a $4$ qubit quantum register circuit, our quantum circuit requires a total number of $12$ quantum gates, with a fidelity of $0.72$ approximately. With the QATE encoding method, we reduce the single qubit gate to $n$. The number of $2$-qubit gates is similar to the existing method. QWE encoding can take lesser $2$ qubit gates depending on the choice of window size. 

%-------------------------------------------
\subsection{Circuit depth}
We have performed circuit depth analysis which is an important measure of usage of gate resources. We have implemented our proposed algorithm and the existing method \cite{shokri2021implementation} on qiskit, and compare the circuit depth as shown in Table-\ref{ckt_depth}. The circuit depth is reduced in the proposed QATE encoding method as compared to the existing approach. 

\begin{table}[htb!]
    \centering
    \begin{tabular}{|c|c|c|}
    \hline
       \textbf{Qubit Size (n)}  & \textbf{Existing approach\cite{shokri2021implementation}} & \textbf{Proposed QATE algorithm}  \\ \hline
        3 & 16& 9\\ \hline 
         4  & 24& 18 \\ \hline
          5   & 32& 22 \\ \hline
           6    & 40& 36 \\ \hline
    \end{tabular}
    \caption{Circuit depth}
    \label{ckt_depth}
\end{table}

\subsection{Error analysis}
There are several sources of errors in the practical circuit simulation on a quantum machine, such as gate-level errors, cross talk, readout and coupling errors, and simulation errors. We have seen that the gate level error is significant for the $2$-qubit gates (example: CNOT). In the QATE encoding method, we have used $1$-qubit and $2$-qubit gates for approximating the kinetic energy function. In this approximation, we incur residual error of $\mathcal{O}\left(h^{3}\right)$ where $h$ is the distance between two successive samples (also called step size). For example, if $10$ qubits are taken to encode the potential energy within a region $[0,1]$, the $h$ can be approximately equal to $1/{2^{10}} \approx 9.7\times 10^{-4}$.

The overall approximate error bound in the diagonal unitary encoding of the function $f(x)$ using the proposed polynomial encoding procedure within polynomial order $r$ encoded in $n$ qubit registers and evolved for time $\Delta t$ has the form given as follows:
     \begin{align}
        \Vert \epsilon_s \Vert \approx \mathcal{O}\left(h^{3}\right) + L_2 \sigma_g^2 + \mathcal{O}\left(1+ \Delta t \left(\frac{T_1+T_2}{T_1 T_2}\right) \right) + \sigma_{cr}^2.
        \label{err_bound1}
    \end{align},
where $\mathcal{O}\left(h^{3}\right)$ is the residual error for step size $h$, $L_2$ represents total CNOT gates with each variance of $\sigma_g^2$, $T_1,T_2$ are decoherence time constants (here, we have taken first order approximation of the decoherence term), $\Delta t$ be evolution time, $\sigma_{cr}^2$ denotes total read-out error variance term.

%----------------------------------

\section{Conclusion}

In this research article, we have studied quantum time evolution operator design on a quantum machine considering practical constraints.  Time evolution plays a vital role in diverse disciplines for studying dynamics, especially in atomic chemistry. Considering the first quantization level, we have proposed a Hamiltonian encoding method for the Kinetic energy operator. It improves the total gate counts and fidelity for the time evolution of Gaussian wave packets.  Further, we have proposed a quantum architecture namely quantum pyramid architecture to efficiently simulate the kinetic energy taking half of the sampled values on a quantum computer by exploiting its structural aspects. The underlying mathematical propositions are given with examples in the appendix. While the proposed QATE encoding shows a time evolution process with high accuracy, the application of the proposed QWE is unknown at the moment. QWE method can be a future direction of research as it exploits the complexity benefit in the momentum domain. We show the complexity analysis of the proposed quantum algorithm with gate counts for $1$ qubit and $2$ qubit gates. Experimental results are shown on the IBM quantum simulator, and the fidelity is compared with the state of the art. There are several future directions of this research for the study of dynamics in chemical experiments, free particle systems, multi-body systems etc.

%-----------------------------------
\section{Acknowledgement}

We acknowledge Rajiv Sangle, MTech in Quantum technology at Indin Institute of Science for his support in the Swap test circuit. We acknowledge Anupama Ray, Dhiraj Madan, and SheshaShayee K Raghunathan of IBM Research Bangalore for their valuable suggestions for improving our work.

% \bibliography{reference.bib}

\ifCLASSOPTIONcaptionsoff
  \newpage
\fi

%--------------------------
\bibliographystyle{IEEEtran}
\bibliography{reference}%
%-----------------------------

%-----------------------------------
\vspace{4mm}
\section{\centering Appendix-I}
\vspace{4mm}
%-----------------------------------

% \subsection{Formation of the Bi-Symmetric Operator using Quantum Gates}

\subsection{Proof of Lemma \ref{lemma1} }\label{bisym-lem1}
% \begin{lemma}\label{bisym-lem1}
    Given $\mathbf{C}$ be a CNOT operator, and $\mathbf{P}=\mathbf{I}\otimes \mathbf{P}_1$ is another operator with $\mathbf{I}$ be the identity operator and $\mathbf{P}_1$ is some phase gate, then $\mathbf{R}=\mathbf{C} \mathbf{P} \mathbf{C}^{\dagger}$ is a bi-symmetric quantum operator. 
% \end{lemma}

\begin{proof}
The CNOT gate $\mathbf{C}$ is denoted by
\begin{align}
    \mathbf{C}=\begin{bmatrix}
1 & 0 & 0 & 0 \\
0 & 1 & 0 & 0 \\
0 & 0 & 0 & 1 \\
0 & 0 & 1 & 0 \\
\end{bmatrix}
\end{align}

Let, the phase gate $\mathbf{P}_1$ is parameterised with the phase $\theta_1$. Hence, the matrix $\mathbf{P}$ is given by
\begin{align}
    \mathbf{P}&=\mathbf{I}\otimes \mathbf{P}_1\nonumber\\
    &=\begin{bmatrix}
1 & 0 \\
0 & 1 \\
\end{bmatrix}
\otimes
\begin{bmatrix}
1 & 0 \\
0 & e^{i\theta_1} \\
\end{bmatrix}\nonumber\\
&=\begin{bmatrix}
1 & 0 & 0 & 0 \\
0 & e^{i\theta} & 0 & 0 \\
0 & 0 & 1 & 0 \\
0 & 0 & 0 & e^{i\theta_1}\\
\end{bmatrix}
\end{align}

The matrix $\mathbf{R}$ can be written as,
\begin{align}
\mathbf{R} &= \mathbf{C} \mathbf{P} \mathbf{C}' \nonumber\\
&=
\begin{bmatrix}
1 & 0 & 0 & 0 \\
0 & 1 & 0 & 0 \\
0 & 0 & 0 & 1 \\
0 & 0 & 1 & 0 \\
\end{bmatrix}
\begin{bmatrix}
1 & 0 & 0 & 0 \\
0 & e^{i\theta} & 0 & 0 \\
0 & 0 & 1 & 0 \\
0 & 0 & 0 & e^{i\theta} \\
\end{bmatrix}
\begin{bmatrix}
1 & 0 & 0 & 0 \\
0 & 1 & 0 & 0 \\
0 & 0 & 0 & 1 \\
0 & 0 & 1 & 0 \\
\end{bmatrix}\nonumber\\
&=\begin{bmatrix}
1 & 0 & 0 & 0 \\
0 & 1 & 0 & 0 \\
0 & 0 & 0 & 1 \\
0 & 0 & 1 & 0 \\
\end{bmatrix} \begin{bmatrix}
1 & 0 & 0 & 0 \\
0 & e^{i\theta} & 0 & 0 \\
0 & 0 & 0 & 1 \\
0 & 0 & e^{i\theta} & 0 \\
\end{bmatrix}\nonumber\\
&=\begin{bmatrix}
1 & 0 & 0 & 0 \\
0 & e^{i\theta} & 0 & 0 \\
0 & 0 & e^{i\theta} & 0 \\
0 & 0 & 0 & 1 \\
\end{bmatrix}.
\end{align}
Here, $R$ is a bi-symmetric operator as it is symmetric about both of its main diagonals. 
\end{proof}

\subsection{Proof of Lemma-\ref{lemma2}}\label{bisym-lem2}
% \begin{lemma}
    Given a list of phase gates as $\mathbf{P}_1,~\dots,~\mathbf{P}_n$ with every $\mathbf{P}_j=diag([1 ~ e^{i\theta_j}])$ placed at $j^{th}$ qubit starting $q[1]$ (second qubit) to $q[n-1]$ (last qubit) with $n=\log_2 N$, the product of the operators $\mathbf{F}_1,~\dots, ~\mathbf{F}_n$ is a diagonal matrix of dimension $N\times N$ with first $2^{n-1}$ elements repeated in order along the main diagonal, where every $\mathbf{F}_j$ is obtained by placing $\mathbf{P}_j$ phase gate at $j^{th}$ qubit in absence of any other gates. 
% \end{lemma}

\begin{proof}
    Let us take a $3$-qubit quantum system, and we place a phase gate $\mathbf{P}_1$, at the $2^{nd}$ qubit (i.e, $q[1]$ starting the count from $0$). The effective operator can be written as
    \begin{align}
        \mathbf{F}_1&= \mathbf{I}\otimes\mathbf{P}_1\otimes \mathbf{I}\nonumber\\
        &=\begin{bmatrix}
            1 & 0\\0&1
        \end{bmatrix}\otimes\begin{bmatrix}
            1 & 0\\0&e^{i\theta_1}
        \end{bmatrix}\otimes\begin{bmatrix}
            1 & 0\\0&1
        \end{bmatrix}\nonumber\\
    &=\begin{bmatrix}
            1 & 0\\0&1
        \end{bmatrix}\otimes\begin{bmatrix}
          1&0&0&0\\0&1&0&0\\0&0&e^{i\theta_1}&0\\0&0&0&e^{i\theta_1}
        \end{bmatrix}\nonumber\\
        &=\begin{bmatrix}
1&0&0&0&0&0&0&0\\0&1&0&0&0&0&0&0\\0&0&e^{i\theta_1}&0&0&0&0&0\\0&0&0&e^{i\theta_1}&0&0&0&0\\0&0&0&0&1&0&0&0\\0&0&0&0&0&1&0&0\\0&0&0&0&0&0&e^{i\theta_1}&0\\0&0&0&0&0&0&0&e^{i\theta_1}
        \end{bmatrix}
\end{align}
Similarly, placing another phase gate $\mathbf{P}_2$ in $3rd$ qubit yields the operator,
\begin{align}
     \mathbf{F}_2&= \mathbf{I}\otimes\mathbf{I}\otimes\mathbf{P}_2 \nonumber\\
     &=\begin{bmatrix}
            1 & 0\\0&1
        \end{bmatrix}\otimes\begin{bmatrix}
            1 & 0\\0&1
        \end{bmatrix}\otimes\begin{bmatrix}
            1 & 0\\0&e^{i\theta_2}
        \end{bmatrix}\nonumber\\
    &=\begin{bmatrix}
1&0&0&0&0&0&0&0\\0&e^{i\theta_2}&0&0&0&0&0&0\\0&0&1&0&0&0&0&0\\0&0&0&e^{i\theta_2}&0&0&0&0\\0&0&0&0&1&0&0&0\\0&0&0&0&0&e^{i\theta_2}&0&0\\0&0&0&0&0&0&1&0\\0&0&0&0&0&0&0&e^{i\theta_2}
        \end{bmatrix}    
\end{align}
Now, while both phase gates are placed together on the quantum circuit, the effective operator becomes (\ref{8x8eqn}).
\begin{figure}[htb!]
\scriptsize
\begin{align}
    \mathbf{F}_1\mathbf{F}_2 &=  \begin{bmatrix}
1&0&0&0&0&0&0&0\\0&e^{i\theta_2}&0&0&0&0&0&0\\0&0&e^{i\theta_1}&0&0&0&0&0\\0&0&0&e^{i(\theta_1+\theta_2)}&0&0&0&0\\0&0&0&0&1&0&0&0\\0&0&0&0&0&e^{i\theta_2}&0&0\\0&0&0&0&0&0&e^{i\theta_1}&0\\0&0&0&0&0&0&0&e^{i(\theta_1+\theta_2)}.
        \end{bmatrix}
        \label{8x8eqn}
\end{align}
\hrule
\end{figure}
In a similar way, if we increase the number of input qubits and place the phase gates from second qubit onward (keeping no gate on the first qubit, i.e., $q[0]$), we can create the product of the operators $\mathbf{F}_1,\mathbf{F}_2,\dots,\mathbf{F}_N$ to be an operator where along the diagonal first $2^{N-1}$ elements are repeated in the second half. For $n=3$, the first $4$ elements are repeated in the next half along the diagonal in exact order. 
\end{proof}

\subsection{Proof of Proposition-\ref{prop1}}\label{propsym1}

% \begin{prop}\label{propsym1}
    The operator $\mathbf{A}=\mathbf{I}\otimes \dots \otimes (\ket{\mathbf{0}}\bra{\mathbf{0}} \times \mathbf{I}~ + ~\ket{\mathbf{1}}\bra{\mathbf{1}}\times \mathbf{X})\otimes\mathbf{I}\otimes\dots \otimes\mathbf{I}$ is a row-exchange operator for a given matrix $\mathbf{F}$ (assuming compatible with $\mathbf{A}$) and a Pauli-operator $\mathbf{X}$, if it is multiplied as $\mathbf{AF}$, and $\mathbf{A}^{\dagger}=\mathbf{A}$ is a column-exchange operator when it is post-multiplied as $\mathbf{FA^{\dagger}}$. The product $\mathbf{A}\mathbf{F}\mathbf{A}^{\dagger}$ has a symmetry about the mid-point along the main diagonal when $\mathbf{F}=\mathbf{F}_1,\dots, \mathbf{F}_n$ following lemma-\ref{lemma2}.
% \end{prop}

\begin{proof}

In the quantum pyramid architecture, the ladder of CNOT gates (in the left) placed on second qubit  connecting the first qubit has the form $\mathbf{A}=(\ket{\mathbf{0}}\bra{\mathbf{0}} \times \mathbf{I}~ + ~\ket{\mathbf{1}}\bra{\mathbf{1}}\times \mathbf{X})\otimes\mathbf{I}\dots \otimes \mathbf{I}$. Similarly, for a CNOT gate placed on $k^{th}$ qubit, the composite representation of the operator becomes $\mathbf{A}=\mathbf{I}\otimes \dots \mathbf{I}\otimes(\ket{\mathbf{0}}\bra{\mathbf{0}} \times \mathbf{I}~ + ~\ket{\mathbf{1}}\bra{\mathbf{1}}\times \mathbf{X})\otimes\mathbf{I}\otimes \dots \mathbf{I}$, where $(\ket{\mathbf{0}}\bra{\mathbf{0}} \times \mathbf{I}~ + ~\ket{\mathbf{1}}\bra{\mathbf{1}}\times \mathbf{X})$ is placed in the $k^{th}$ position in the operator prepared by by tensor product of CNOT and identity operators. The overall operator by the ladders of CNOT is the product of all such composite operators. As an example, it has the form  for $3$ qubits as follows 
\begin{align}
    \mathbf{A}= \begin{bmatrix}
        1&0&0&0&0&0&0&0\\
        0&1&0&0&0&0&0&0\\
        0&0&1&0&0&0&0&0\\
        0&0&0&1&0&0&0&0\\
        0&0&0&0&0&0&0&1\\
        0&0&0&0&0&0&1&0\\
        0&0&0&0&0&1&0&0\\
        0&0&0&0&1&0&0&0
    \end{bmatrix}.
    \label{row-exc}
\end{align}
Similarly, we get another operator $\mathbf{A}^{\dagger}=\mathbf{A}$ in the right side of the QPA with the ladders of CNOT gate. Note that, the multiplication of $\mathbf{A}$ on the left of $\mathbf{F}$, i.e., $\mathbf{AF}$ will exchange the rows of $\mathbf{F}$ (last $N/2$ rows here), and the multiplication of $\mathbf{A}$ on the right side of $\mathbf{F}$, i.e., $\mathbf{F}\mathbf{A}$ will exchange the columns of $\mathbf{F}$ (last $N/2$ columns here). Through successive row and column exchange in $\mathbf{F}$, we get a symmetry in the diagonal of the operator $\mathbf{F}$. As an example, for $3$ qubit system $\mathbf{AF}\mathbf{A}^{\dagger}$ ( assuming $\theta_1+\theta_2=\theta_3$) has the following form (\ref{afa_eqn}).
\begin{figure*}
\hrule
\begin{align}
    \mathbf{AF}\mathbf{A}^{\dagger} &= \begin{bmatrix}
        1&0&0&0&0&0&0&0\\
        0&1&0&0&0&0&0&0\\
        0&0&1&0&0&0&0&0\\
        0&0&0&1&0&0&0&0\\
        0&0&0&0&0&0&0&1\\
        0&0&0&0&0&0&1&0\\
        0&0&0&0&0&1&0&0\\
        0&0&0&0&1&0&0&0
    \end{bmatrix} \begin{bmatrix}
1&0&0&0&0&0&0&0\\0&e^{i\theta_2}&0&0&0&0&0&0\\0&0&e^{i\theta_1}&0&0&0&0&0\\0&0&0&e^{i\theta_3}&0&0&0&0\\0&0&0&0&1&0&0&0\\0&0&0&0&0&e^{i\theta_2}&0&0\\0&0&0&0&0&0&e^{i\theta_1}&0\\0&0&0&0&0&0&0&e^{i\theta_3}\end{bmatrix} \begin{bmatrix}
        1&0&0&0&0&0&0&0\\
        0&1&0&0&0&0&0&0\\
        0&0&1&0&0&0&0&0\\
        0&0&0&1&0&0&0&0\\
        0&0&0&0&0&0&0&1\\
        0&0&0&0&0&0&1&0\\
        0&0&0&0&0&1&0&0\\
        0&0&0&0&1&0&0&0
    \end{bmatrix} \nonumber\\
    &= \begin{bmatrix}
        1&0&0&0&0&0&0&0\\
        0&1&0&0&0&0&0&0\\
        0&0&1&0&0&0&0&0\\
        0&0&0&1&0&0&0&0\\
        0&0&0&0&0&0&0&1\\
        0&0&0&0&0&0&1&0\\
        0&0&0&0&0&1&0&0\\
        0&0&0&0&1&0&0&0
    \end{bmatrix} \begin{bmatrix}
1&0&0&0&0&0&0&0\\0&e^{i\theta_2}&0&0&0&0&0&0\\0&0&e^{i\theta_1}&0&0&0&0&0\\0&0&0&e^{i\theta_3}&0&0&0&0\\0&0&0&0&0&0&0&1\\0&0&0&0&0&0&e^{i\theta_2}&0\\0&0&0&0&0&e^{i\theta_1}&0&0\\0&0&0&0&e^{i\theta_3}&0&0&0\end{bmatrix} \nonumber\\
    &= \begin{bmatrix}
1&0&0&0&0&0&0&0\\0&e^{i\theta_2}&0&0&0&0&0&0\\0&0&e^{i\theta_1}&0&0&0&0&0\\0&0&0&e^{i\theta_3}&0&0&0&0\\0&0&0&0&e^{i\theta_3}&0&0&0\\0&0&0&0&0&e^{i\theta_1}&0&0\\0&0&0&0&0&0&e^{i\theta_2}&0\\0&0&0&0&0&0&0&1\end{bmatrix}.
\label{afa_eqn}
\end{align}
\hrule
\end{figure*}

Hence, $\mathbf{AF}\mathbf{A}^{\dagger}$ is a bi-symmetric operator. Now, one can encode the phases using a suitable algorithm as per the desired unitary required for the quantum time evolution.     
\end{proof}

\textbf{Note:} In QISKIT, the orientation of the LSB and MSB is different. As such the matrix that comes out is also different from the ones shown here. However, the end outcome is the same for the final quantum state.

\end{document}